%%%%%%%%%%%%%%%%%%%% author.tex %%%%%%%%%%%%%%%%%%%%%%%%%%%%%%%%%%%
%
% sample root file for your "contribution" to a contributed volume
%
% Use this file as a template for your own input.
%
%%%%%%%%%%%%%%%% Springer %%%%%%%%%%%%%%%%%%%%%%%%%%%%%%%%%%

% RECOMMENDED %%%%%%%%%%%%%%%%%%%%%%%%%%%%%%%%%%%%%%%%%%%%%%%%%%%
\documentclass[11pt]{article}
\usepackage{hyperref}
\usepackage{amsfonts,amssymb,amsmath,amsthm}
\usepackage{slashed}
\usepackage{color}
\usepackage{pdfsync}
\usepackage{cancel}
\usepackage{soul}
\textwidth=16.5cm
\textheight=23cm
\voffset -2.5cm
\hoffset -1.5cm
\definecolor{purple}{rgb}{0.6,0,0.6}
\definecolor{bl}{rgb}{0.2,0.1,0.9}
\definecolor{tur}{rgb}{0.1,0.9,0.9}
\definecolor{gr}{rgb}{0.2,0.7,0.2}
\definecolor{pink}{rgb}{1,0,0.8}

\newtheorem{theorem}{Theorem}[section]
\newtheorem{proposition}[theorem]{Proposition}
\newtheorem{definition}[theorem]{Definition}
\newtheorem{conjecture}[theorem]{Conjecture}
\newtheorem{assumptions}[theorem]{Assumption}
\theoremstyle{remark}
\newtheorem{remark}[theorem]{Remark}
\newcommand\mb[1]{\mbox{\rm#1}}
\newcommand\ee[1]{\stackrel{\mbox{\scriptsize(\ref{#1})}}{=}}
\newcommand\eet[1]{\stackrel{\mbox{\scriptsize\rm#1}}{=}}
\newcommand{\qu}{\overline}
\newcommand\tr[1][]{\mathop{\mathrm{tr}_{#1}}}
\newcommand\sgn{\mathrm{sgn}}
\newcommand\wt{\widetilde}
\newcommand{\AAA}{\mathcal A}
\newcommand{\C}{\mathbb C}
\newcommand{\E}{\mathbb E}
\newcommand{\EEE}{\mathcal E}
\renewcommand{\H}{\mathbb H}
\newcommand{\HHH}{\mathcal H}
\newcommand{\N}{\mathbb N}

\newcommand{\R}{\mathbb R}
\newcommand{\Z}{\mathbb Z}
\numberwithin{equation}{section}
\begin{document}
\title{Hodge-elliptic genera and how they govern K3 theories}
% Use \titlerunning{Short Title} for an abbreviated version of
% your contribution title if the original one is too long
\author{\Large Katrin Wendland\footnote{katrin.wendland@math.uni-freiburg.de}\\ [5pt]
 \normalsize{$^*$Mathematics Institute, University of Freiburg}\\
 \normalsize{D-79104 Freiburg, Germany.  }}
 \date{\vspace{-5ex}}

% Use \authorrunning{Short Title} for an abbreviated version of
% your contribution title if the original one is too long
%
% Use the package "url.sty" to avoid
% problems with special characters
% used in your e-mail or web addressL_0
%
\maketitle
\abstract{The  (complex) Hodge-elliptic genus and its conformal field theoretic counterpart were recently introduced by Kachru and Tripathy, refining the traditional complex elliptic genus. We construct a different, so-called chiral Hodge-elliptic genus, which is expected to agree with the generic conformal field theoretic Hodge-elliptic genus, in contrast to the complex Hodge-elliptic genus as originally defined.

For K3 surfaces $X$, the chiral Hodge-elliptic genus is shown to be independent of all moduli. Moreover, employing Kapustin's results on infinite volume limits it is shown that it agrees 
with the generic conformal field theoretic Hodge-elliptic genus of K3 theories, while the complex Hodge-elliptic genus
does not. This new invariant governs part of the field content of K3 theories, supporting the idea that all their spaces of states have a common subspace which underlies the generic conformal field theoretic Hodge-elliptic genus, and thereby the complex elliptic genus. Mathematically, this space is modelled by the sheaf cohomology of the chiral de Rham complex of $X$. It decomposes into irreducible representations of the $N=4$ superconformal algebra such that the multiplicity spaces of all massive representations have precisely the dimensions required in order to furnish the representation of the Mathieu group $M_{24}$ that is predicted by Mathieu Moonshine. This is interpreted as evidence in favour of the ideas of symmetry surfing, which have been proposed by Taormina and the author, along with the claim that the sheaf cohomology of the chiral de Rham complex is a natural home for Mathieu Moonshine.

These investigations also imply  that the generic chiral algebra of  K3 theories is precisely the $N=4$ superconformal algebra at central charge $c=6$, if the usual predictions on infinite volume limits from string theory hold true.}
\clearpage

\tableofcontents
\section{Introduction and summary}\label{intro}
In conformal field theory (CFT), mathematical structures that have
a counterpart in geometry play a key role. 
The success stories of orbifolding
techniques \cite{ha90,th97,dhvw85,dhvw86} and 
mirror symmetry \cite{lvw89,grpl90,cogp91,cls90,ko95}
are examples of this. 
As a common feature, there are striking correspondences
that allow to recover geometric invariants, on the one hand, 
in terms of quantum numbers in conformal field theory, 
on the other. 
Though rooted in string theory, the study of the
relevant invariants and their imprint in conformal
field theory
is a fruitful mathematical enterprise, independently of string theory.
\smallskip

A concrete example of such a correspondence identifies
the Euler characteristic of a compact Calabi-Yau manifold $X$
with the Witten index  of
an associated conformal field theory, obtained
as non-linear sigma model on $X$ \cite{wi82}. 
Indeed, the predictive power of string theory
motivates this correspondence
\cite{wi82,lvw89},
since a large volume limit of the sigma model
is expected to recover the cohomology of the 
target manifold $X$.
More generally, 
the complex elliptic genus
\cite{hi88,wi88,kr90} of $X$, which can 
be defined as the holomorphic Euler characteristic
of a certain virtual bundle $\E_{q,-y}$ on $X$, is expected
to be recovered from the part of the partition function
of the associated superconformal field theory
which a topological half-twist \cite{egya90,wi91} projects to 
\cite{akmw87,eoty89,dfya93,wi94}.

Recently, Kachru and Tripathy have defined 
a very interesting refined version of the complex elliptic genus
of a compact Calabi-Yau manifold $X$, 
which they call the {\em Hodge-elliptic genus}
\cite{katr16}. The key idea is to introduce an additional parameter 
which keeps track of the grading on the cohomology
of the virtual bundle $\E_{q,-y}\longrightarrow X$ that
underlies the complex elliptic genus. 
This grading, in turn, has as its counterpart the natural grading of the space of states
of an associated superconformal field theory
by the right-moving $U(1)$ charge. Indeed,
the Hodge-elliptic genus has a natural
counterpart in conformal field theory, 
which is also introduced in \cite{katr16},
and which we call
the {\em conformal field theoretic Hodge-elliptic genus}.
It is   important
to keep these two versions of the Hogde-elliptic
genus apart, as indeed they disagree,
most of the time. 
To make the difference clearer, we will add the adjective 
{\em complex} to Kachru and Tripathy's {\em Hodge elliptic
genus}.

Both these new quantities are 
very promising, since the additional grading
eliminates the typical cancellations that make
it  so difficult to reconstruct  data from 
the complex elliptic genus and its conformal
field theoretic counterpart. But
preventing such cancellations, in general,
causes a dependence on the moduli. 
Indeed, the complex Hodge-elliptic genus of a general compact Calabi-Yau manifold
$X$ in dimensions greater than two
is expected to depend on the complex structure of $X$,
while it is independent of the K\"ahler 
structure. We thus obtain meaningful geometric
invariants by insisting on a fixed complex structure
for any of our Calabi-Yau manifolds.
For complex tori and for K3 surfaces,
however, the complex Hodge-elliptic genus is independent
of the complex structure \cite{katr16}.

The conformal field theoretic Hodge-elliptic genus,
on the other hand, always severely depends on 
{\em all} the moduli. 
By definition, it is a power series 
in three formal variables with integral 
exponents. Its coefficients, up to signs, 
are just the dimensions of eigenspaces of certain
natural linear operators on the space of states
of the conformal field theory. The moduli dependence
is thus reflected in a {\em jumping} behaviour of 
the coefficients. In particular, each such coefficient
{\em generically} attains its minimal value, on 
the moduli space, while it may jump to a higher
value at non-generic points. To obtain
a quantity that stands any chance of relating
to some invariant geometric counterpart, instead 
of the conformal field theoretic Hodge-elliptic genus,
one should therefore consider  
a {\em generic conformal field theoretic 
Hodge-elliptic genus}. We shall define such a quantity, below,
as is certainly in the spirit of \cite{katr16}.
In string theory language, 
our definition amounts to an {\em infinite volume limit} 
of the conformal field theoretic Hodge elliptic
genus. 
However, we claim that
the generic conformal field theoretic Hodge-elliptic
genus of a CFT with geometric
interpretation on some Calabi-Yau manifold $X$ also
differs from the complex Hodge-elliptic genus of $X$,
in general.

The reason for this discrepancy lies in the very
definition of the complex Hodge-elliptic genus for a compact Calabi-Yau
manifold $X$ by means of the cohomology of the virtual
bundle $\E_{q,-y}\longrightarrow X$. Indeed, there
is no reason to expect the cohomology of this bundle 
to describe conformal field theory data, not even in
an infinite volume limit. According to Kapustin's
seminal insights \cite{ka05}, such an infinite
volume limit of a topological half-twist of  a sigma
model on $X$ should yield the sheaf cohomology of the
{\em  \mb{(}holomorphic\mb{)} chiral de Rham complex} $\Omega_X^{\rm ch}$
on $X$, instead.
The sheaf of sections of the virtual bundle $(-y)^{D\over2}\E_{q,y}$
is isomorphic to the {\em graded object} of the 
chiral de Rham complex for a natural filtration 
of $\Omega_X^{\rm ch}$. Therefore, the graded
Euler characteristics of the two agree, but the 
separate degrees $H^j(X,\E_{q,-y})$ of the cohomology,
which enter crucially
in the definition of the complex Hodge-elliptic genus, 
need not be isomorphic to those of $\Omega_X^{\rm ch}$.

It is therefore natural to replace the complex Hodge-elliptic genus
by a new invariant which is directly obtained from the sheaf cohomology
of the (holomorphic) chiral de Rham complex of $X$, by introducing an additional
grading. That this is possible is shown in this note, and
we call the resulting quantity the {\em chiral Hodge-elliptic genus}.
By construction, it does not depend on the K\"ahler structure of 
$X$, but it may depend on the complex structure. 
In general, we expect
the chiral Hodge-elliptic genus of any compact Calabi-Yau
manifold $X$ to agree with the generic conformal field theoretic
Hodge-elliptic genus of CFTs with geometric interpretation by
$X$ at some fixed complex structure. This allows to predict the 
behaviour of the chiral Hodge-elliptic genus under mirror symmetry.
\medskip

The new quantities introduced so far turn out to be 
particularly useful in the context of K3 surfaces
and K3 theories. Note that in the K3 setting, 
the notions of moduli spaces of CFTs and geometric interpretations 
are well understood \cite{se88,ce91,asmo94,nawe00},
even if these concepts may
seem a bit vague for other Calabi-Yau manifolds $X$.

Similarly to the complex Hodge-elliptic genus of \cite{katr16},
the chiral Hodge-elliptic genus of a K3 surface $X$
is independent of the complex 
structure, as is shown in this note by an explicit calculation. 
Under the assumption that the generic chiral algebra of all K3 
theories is the $N=4$ superconformal algebra at central
charge $c=6$, in addition, we show 
that as expected, 
the generic conformal field theoretic Hodge-elliptic genus
of K3 theories agrees with the 
chiral Hodge-elliptic genus of K3 surfaces, while
it disagrees with the complex Hodge-elliptic genus of \cite{katr16}.
Note that  a failure of the  above assumption on the generic
chiral algebra of K3 theories would be highly interesting 
in itself. Developing the 
representation theory of the relevant enhancement of the 
$N=4$ superconformal algebra would amount to a major 
advance towards the construction of K3 theories beyond the families
of examples that are known, so far. However, 
the investigation of Hodge-elliptic genera on K3 produces 
further pieces of evidence in favour of the expectation
that the generic chiral algebra of K3 theories is {\em not} 
extended beyond the $N=4$ superconformal algebra.
Indeed, in this note we show that this already follows
from the claims made in Kapustin's work \cite{ka05} on
infinite volume limits.

Let us therefore
assume, temporarily, 
that the generic chiral algebra of K3 theories 
is {\em not} enhanced beyond the $N=4$ superconformal
algebra at central charge $c=6$.  
That the chiral Hodge-elliptic genus of K3 surfaces is an
invariant which agrees with the generic conformal field theoretic
Hodge-elliptic genus of K3 theories then
means that certain quantities of our K3 theories are protected
by this invariant. Concretely, one may work under
the assumption that all K3 theories share a 
common space $\widehat\H^{\rm R}$ of protected states, 
all of which are Ramond ground states
with respect to the right-moving 
superconformal algebra. At this level, $\widehat\H^{\rm R}$ 
is just a subrepresentation of the Ramond sector of our theory
under the action of the $N=4$ superconformal algebra at
central charge $c=6$, extended by the zero mode of the 
$U(1)$ current of the right-moving superconformal algebra. 
It is left for future work to equip it with further
structure and to decide whether or not this smoothly
varies with respect to the moduli. 
The sheaf cohomology of the (holomorphic) chiral de Rham
complex of a K3 surface $X$
serves as a model for the  subspace of 
protected states which is related to $\widehat\H^{\rm R}$ by
spectral flow. Moreover, by explicitly determining
the generic conformal field theoretic Hodge-elliptic genus, we can show
that as a representation of 
the $N=4$ superconformal algebra,
$\widehat\H^{\rm R}$ splits into a direct sum of irreducible
representations whose multiplicity
spaces have precisely the
right dimensions to accomodate Mathieu Moonshine
according to \cite{eot10,ch10,ghv10a,ghv10b,eghi11,ga12}. In particular, the
multiplicity spaces of massive representations
are never virtual, as is required by the results of \cite{ga12}.
Using the sheaf cohomology
of the chiral de Rham complex of $X$ as a 
mathematically more established model, this means that the 
latter is  the 
natural home for {\em Mathieu Moonshine},
 in accord with \cite[\S4.2]{we14}.
In fact, this implication  holds  true independently of the 
above assumption on the generic chiral algebra
of K3 theories.
As such, our final conclusion agrees with 
results of  Bailin Song in \cite{so17}\footnote{The preprint \cite{so17} reached
me during the final stages of writing this note.
In fact, I am grateful to Bailin Song for his comments
on an earlier version of this work, as they allowed me
 to vastly expand the interpretation of
my results.}.

These findings support the idea of {\em symmetry surfing},
as proposed  by Taormina and the author\footnote{Occasionally, 
this idea is attributed to the seminal paper
\cite[p.~4]{eot10}, where indeed, the authors ask ``{\em Is  
it possible that these automorphism groups at 
isolated points in the moduli space of 
K3 surface are enhanced to 
$M_{24}$ over the whole of moduli space 
when we consider the elliptic genus?}''. 
Similarly, in \cite[p.~2]{ghv12}, one finds the statement
``{\em \ldots the elliptic genus is independent 
of the specific point in the moduli space of K3 that is 
considered, and thus the symmetries of the elliptic 
genus are in some sense the union of all symmetries 
that are present at different points in moduli space.}''
These  statements, however, do not anticipate any
 concrete constructions, let alone the symmetry surfing proposed in
\cite{tawe11,tawe12,tawe13}.}, starting in \cite{tawe11,tawe12}.
In particular, in \cite{tawe13}, we show how to construct
the leading order massive representation of Mathieu Moonshine
for a maximal subgroup of $M_{24}$, implementing a twist.
Our construction is based on a space
of states that is common to all K3 theories obtained
by a standard $\Z_2$-orbifolding from a theory with
target a complex torus, which we now may
recover as subspace of the
protected space $\widehat\H^{\rm R}$, above. Further
evidence in favour of this idea is provided in  \cite{gakepa16}.
\medskip

Altogether, the chiral Hodge-elliptic genus of K3 turns out to be
a  surprisingly useful new invariant. 
It truly refines the complex elliptic genus to a three-parameter
function in $(\tau,\,z,\,\nu)\in{\mathfrak H}\times\C^2$ which is
still elliptic in $z$ with respect to $\Lambda_\tau=\Z\tau+\Z$, and
which is Mock modular in $\tau$. The behaviour in the new parameter
is polynomial in $u=\exp(2\pi i\nu)$ and $u^{-1}$. 
On the other hand,  the meaning of the complex
Hodge-elliptic genus of \cite{katr16}  for conformal field theory is 
not so obvious.  Our findings may well bear some implications on the
black hole counting formulas proposed in \cite{katr16}, which were built
on the  assumption that the complex Hodge elliptic genus agrees with
the generic conformal field theoretic Hodge-elliptic genus, which however 
contradicts Kapustin's results on large volume limits, as we show. Nevertheless,
we expect the complex Hodge-elliptic genus of \cite{katr16} to 
be just as useful in geometry. We therefore also provide an implicit formula 
for the complex Hodge-elliptic genus of K3 surfaces, below.
\medskip

\noindent
In more detail, the structure of this note is as follows:

Sect.~\ref{setup} introduces the 
various inhabitants  of our zoo of Hodge-elliptic genera. 
We begin by recalling the definitions of the 
{\em conformal field theoretic} elliptic and  Hodge-elliptic 
genera in Sect.~\ref{CFTgenera}, and of 
their supposed {\em geometric} counterparts, namely
the {\em complex} elliptic  and  Hodge-elliptic genera for
compact Calabi-Yau manifolds in  Sect.~\ref{geometricellgen}. 
In Sect.~\ref{genericity}, 
we introduce the {\em generic} conformal field theoretic 
Hodge-elliptic genus, and we explain why it stands a chance
to have a geometric counterpart. 
We then introduce the  {\em chiral} Hodge-elliptic genus,
after recalling some properties of the (holomorphic) chiral de Rham complex,
which are needed for the definition of this final 
member of the Hodge-elliptic species. From this, we justify
our expectation that the {\em chiral} Hodge-elliptic genus should 
play a more important role for conformal field theory than the 
{\em complex} Hodge-elliptic 
genus. Indeed, we argue that string theory predicts that 
the generic conformal field theoretic Hodge-elliptic
genus agrees with the chiral Hodge-elliptic genus,
and not with the complex Hodge-elliptic genus, in general.
\smallskip

The following Sect.~\ref{warm} serves as a warmup for 
more serious calculations of Hodge-elliptic genera: 
for complex tori, we 
calculate each of the elliptic and Hodge-elliptic genera 
that were introduced in Sect.~\ref{setup}.
This is a straightfoward yet rewarding exercise, as it
turns out that the generic conformal field theoretic Hodge-elliptic 
genus, the chiral Hodge-elliptic genus, as well as the complex
Hodge-elliptic genus all agree in this case. This may 
explain why on first sight, one wouldn't distinguish
between the complex Hodge-elliptic genus and its chiral relative.
\smallskip

Sect.~\ref{proposal} is primarily devoted to the Hodge-elliptic
genera for K3 surfaces and K3 theories and 
contains the main results of this work.
As a first step, in Sect.~\ref{Kummer}, we calculate the 
conformal field theoretic Hodge-elliptic genus for 
standard $\Z_2$-orbifolds of non-linear sigma models
with target a complex torus. This, again, is an easy 
exercise, which we find worthwhile in view of a comparison
to the results of \cite{katr16}. 
Sect.~\ref{genericK3CFT}
addresses the generic conformal field theoretic Hodge-elliptic
genus of K3 theories. We derive a closed formula, which 
yields the latter Hodge-elliptic genus if and only if the generic chiral algebra of
K3 theories is precisely the $N=4$ superconformal algebra
at central charge $c=6$. 
Under the assumption that
this latter condition holds, in Sect.~\ref{HodgeK3},
we prove that in the K3 case, the complex Hodge-elliptic genus {\em differs}
from the generic conformal field theoretic Hodge-elliptic genus.
We also provide an implicit formula for the complex Hodge-elliptic genus 
of K3 surfaces. 

Sect.~\ref{comments} addresses the chiral
Hodge-elliptic genus of K3 surfaces. We argue that the results
up to this point yield a string theory proof of an explicit
formula for the chiral
Hodge-elliptic genus of K3 surfaces,
relying on the assumption that the generic chiral algebra of 
K3 theories is as stated above, and that an infinite volume
limit of the topological half-twist of K3 theories yields
the (sheaf) cohomology of the (holomorphic) chiral de Rham complex,
as claimed by Kapustin \cite{ka05}. We also 
provide a direct proof of this formula, which crucially uses Bailin Song's result on the
global holomorphic sections of the chiral de Rham complex 
\cite[Thm.~1.2]{so16}, thus supporting the belief that the
string theory assumptions stated above hold true.
Reversing
the argument, one obtains a string theory proof of 
Bailin Song's beautiful result
that the global holomorphic sections of the chiral de Rham complex on
a K3 surface precisely 
yield an $N=4$ superconformal vertex operator
algebra at central charge $c=6$. 
Moreover, under the assumption that this vertex operator
algebra yields the generic chiral algebra of K3 theories,
it follows that the generic conformal field theoretic Hodge-elliptic
genus of K3 theories agrees with the chiral Hodge-elliptic genus
of K3 surfaces. That the assumption holds, in turn, is found to follow from
Kapustin's claims on infinite volume limits \cite{ka05}.

We conclude in a final
Sect.~\ref{Mathieu}, explaining the consequences of our calculations
for  {\em Mathieu Moonshine}: we interpret the results of this note
as strongly supporting the expectation  \cite[\S4.2]{we14}
that the chiral de Rham complex might bear the key to understanding
Mathieu Moonshine by means of symmetry surfing, as in
\cite{tawe11,tawe12,tawe13,gakepa16}. 
It serves as a mathematical model of a common subspace of the 
space of states of all K3 theories, which is protected by the 
chiral Hodge-elliptic genus and which naturally carries the action of
an $N=4$ superconformal algebra at central charge $c=6$,
and of the finite symplectic automorphism groups of 
all K3 surfaces.
\smallskip

An Appendix lists the relevant formulae for Jacobi theta functions
and characters of the irreducible representations of the (small)
$N=4$ superconformal algebra at central
charge $c=6$.
\section{The setup: complex and Hodge-elliptic genera}\label{setup}
In this section, we recall the setup and definitions of complex and 
Hodge-elliptic genera in the conformal field theoretic as well as
the geometric context, 
and we extend these notions by a few further ideas. We begin by 
introducing some notations, and by
stating the general assumptions that are made throughout this note.
See, for example, \cite{we14} for a recent review of the relevant notions,
adapted to the applications that we have in mind, here.
\smallskip

Some of the definitions and statements given in this note hold in 
greater generality than claimed. However, to keep the exposition
more accessible,
throughout this note, we restrict our attention to a certain type of
two-dimensional Euclidean unitary superconformal field 
theories (SCFTs):
\begin{assumptions}\label{SCFTass}
In this note, by a {\em superconformal field theory (SCFT)}, a 
two-dimensional Euclidean unitary superconformal field theory with $N=(2,2)$ 
worldsheet supersymmetry is meant, througout.
For the generators
of the two commuting copies of $N=2$ super-Virasoro algebras acting on
the space of states $\H$ of such a theory,
we use the standard notations and normalizations \mb{(}see, for example,
\mb{\cite[\S2]{lvw89})}.

Furthermore, we assume that the
central charges of the theory obey
$c=\overline c$, $c=3D$, $D\in\N$, and that {\em space-time supersymmetry}
holds. 
For the space of states $\H$, we assume $\H=\H^{\rm NS}\oplus\H^{\rm R}$, with
$\H^{\rm NS}$ denoting the {\em Neveu-Schwarz sector} and $\H^{\rm R}$ the
{\em Ramond sector}, referring to left- and right-moving boundary
conditions simultaneously, as we require that the $\mb{NS-R}$ and $\mb{R-NS}$
sectors are trivial. 
We finally assume that all eigenvalues of
$J_0$ and $\qu J_0$ on $\H^R$ belong to ${c\over2}+\Z$.

We set $q:=\exp(2\pi i\tau)$ for $\tau\in\mathbb C,\,\Im(\tau)>0$, and 
$y:=\exp(2\pi i z)$ for $z\in\mathbb C$, and we denote the standard partition
function along with its $\wt{\rm R}$-sector by
\begin{eqnarray*}
Z(\tau,z) &=& 
\tr\nolimits_{\H} \left( {\textstyle{1\over2}}\left( 1+(-1)^{J_0-\qu J_0} \right)
y^{J_0} q^{L_0-{c\over24}} \qu y^{\qu J_0} \qu q^{\qu L_0-{\qu c\over24}}\right),\\
Z_{\wt{\rm R}}(\tau,z) &=& 
\tr\nolimits_{\H^R} \left( (-1)^{J_0-\qu J_0} y^{J_0} q^{L_0-{c\over24}} \qu y^{\qu J_0} \qu q^{\qu L_0-{\qu c\over24}}\right).
\end{eqnarray*}
\end{assumptions}
The assumption of space-time supersymmetry ensures that 
the linear operator $J_0-\qu J_0$ on $\H$
possesses only integral eigenvalues and
that the Ramond and the Neveu-Schwarz sector are related by 
{\em spectral flow}.  By the additional assumptions
on the spectra of $J_0$ and $\qu J_0$ on the Ramond sector,
this implies that
the eigenvalues of these  two operators are integral 
on the Neveu-Schwarz sector.
In a string theory interpretation, the properties listed in
Assumption \ref{SCFTass} are expected to be necessary to allow an
interpretation of the theory as the internal CFT of a non-linear sigma model with a 
compact $D$-dimensional Calabi-Yau target space. 
It is worth noting, however,
that these assumptions are stated here, and turn out to be useful, 
independently of such a string theory interpretation.
\smallskip

We are now ready to define and discuss the various versions
of conformal field theoretic and geometric elliptic genera that 
are the topic of this note:
\subsection{Conformal field theoretic elliptic genera}\label{CFTgenera}
In this section, we recall the definitions of the {\em conformal field theoretic 
elliptic genus} and of its refinement to the {\em conformal field theoretic
Hodge-elliptic genus}, and we briefly discuss some of their properties, in particular
explaining the necessity to require  Assumption \ref{SCFTass}.
\begin{definition}\label{cftellgen}
Consider an $N=(2,2)$ superconformal field theory at central charges
$c,\,\qu c$ and with space of states $\H=\H^{\rm NS}\oplus\H^{\rm R}$, 
which obeys  Assumption \mb{\ref{SCFTass}}.
Then
\begin{eqnarray*}
\EEE^{\rm CFT}(\H;\tau,z) 
&:=& \tr\nolimits_{\mathbb H^{\rm R}}\left( (-1)^{J_0-\overline J_0}y^{J_0} q^{L_0-{c\over24}} 
\overline q^{\overline L_0-{\overline c\over24}}\right)
\end{eqnarray*}
is the \textsc{conformal field theoretic elliptic genus} of the theory.
Now let 
$$
\widetilde\H^{\rm R}:=\left\{\phi\in\H^{\rm R} \mid \qu L_0\phi = {\textstyle{\qu c\over24}}\phi \right\}
$$ 
denote the subspace 
of the Ramond sector given by those states which
are {\em Ramond ground states} with respect to
the right-moving superconformal algebra, and set
$u:=\exp(2\pi i\nu)$ for $\nu\in\mathbb C$. 
Then, following \mbox{\rm\cite{katr16}},
$$
\EEE_{\rm Hodge}^{\rm CFT}(\H;\tau,z,\nu) 
:= \tr\nolimits_{\widetilde\H^{\rm R}}\left( (-1)^{J_0-\overline J_0}y^{J_0} u^{\overline J_0}
q^{L_0-{c\over24}} \right)
$$
is the \textsc{conformal field theoretic Hodge-elliptic genus} of the theory.
\end{definition}
\noindent
Note that $\widetilde\H^{\rm R}$ is isomorphic to the space of states of
the topological half-twist \cite{egya90,wi91}
of our SCFT, by construction:
with $Q:=\qu G_0^+$, we have $\widetilde\H^{\rm R}=\ker Q\cap\ker Q^\dagger$,
the space of ``harmonic representatives'' of the BRST cohomology
$\ker Q/\mb{im } Q$.

By the standard arguments for cancellations due
to supersymmetry,
$$
\EEE^{\rm CFT}(\H;\tau,z) 
= \tr\nolimits_{\widetilde\H^{\rm R}}\left( (-1)^{J_0-\overline J_0}y^{J_0} q^{L_0-{c\over24}}\right),
$$
showing that indeed, the conformal field theoretic Hodge-elliptic genus
$\EEE_{\rm Hodge}^{\rm CFT}(\H;\tau,z,\nu)$ 
is a refinement of the conformal field theoretic elliptic
genus
$\EEE^{\rm CFT}(\H;\tau,z)$:
\begin{equation}\label{CFTHEGisCFTEG}
\EEE_{\rm Hodge}^{\rm CFT}(\H;\tau,z,\nu=0) = \EEE^{\rm CFT}(\H;\tau,z).
\end{equation}
Let us remark that the very Assumption \ref{SCFTass}
on the type of superconformal field theories that enter the Def.~\ref{cftellgen}
are necessary, in order to
ensure that the conformal field theoretic Hodge-elliptic genus possesses a power series expansion
(with integral exponents only) in $q,\, y^{\pm{1\over2}}$, and $u^{\pm{1\over2}}$. 
For $y^{\pm{1\over2}}$ and $u^{\pm{1\over2}}$, this is immediate. For $q$, this claim
is a consequence of well-known properties of the spectral flow
(see, for example, \cite{se86,se87} or \cite[\S3.4]{gr97}), along the lines
of an argument already presented in \cite[\S3.2]{tawe17}: indeed, 
our assumptions ensure that every common eigenstate 
$\phi\in\widetilde\H^R$ of $J_0,\,\qu J_0$
and $L_0$ is the image, under spectral flow, of some state 
in the Neveu-Schwarz sector with conformal weights 
$(h;\qu h)$ and $U(1)$ charges $(Q;\qu Q)$ with 
$h\geq{|Q|\over2}$,\;
$2\qu h=\pm\qu Q$,
$h-\qu  h\in{1\over2}\Z$
and $Q,\,\qu Q\in\Z$. Moreover,  by our assumption of space-time supersymmetry,
$Q-\qu Q\in2\Z$ if and only if  $h-\qu  h\in\Z$,
or equivalently, if and only if the state is bosonic. The holomorphic conformal weight of
$\phi$ thus is
$$ 
\textstyle
h\mp{Q\over2}+{c\over24} = h-\qu h \mp{Q-\qu Q\over2} +{c\over24} 
\qquad\in\; {c\over24} +\N,
$$
as claimed. Note that the additional assumption on the eigenvalues of $J_0$ and $\qu J_0$
in Def.~\ref{cftellgen} is sometimes only tacitly made, in the literature. 
As pointed out, for example, in 
\cite[\S3.4]{gr97} and \cite[\S3.1.1]{diss}, it is equivalent to the requirement 
that the theory is invariant under the {\em purely holomorphic} and {\em anti-holomorphic
two-fold spectral flows}. By \cite{eoty89}, this condition should hold for all SCFTs that arise
as non-linear sigma models with a compact Calabi-Yau target space.

The very fact that
$\EEE_{\rm Hodge}^{\rm CFT}(\H;\tau,z,\nu)$ has a formal power series expansion in 
$q,\, y^{\pm{1\over2}}$, and $u^{\pm{1\over2}}$, 
where only integral exponents occur, seems to be the main advantage of this new quantity over the 
partition function of the underlying SCFT, as we shall see below. 
Indeed, the advantage of introducing $\EEE_{\rm Hodge}^{\rm CFT}(\H;\tau,z,\nu)$  
is not at all immediate, because just like the partition function, 
the conformal field theoretic Hodge-elliptic genus severely depends on the 
moduli of the SCFT chosen at the outset. 
In contrast, the traditional conformal field theoretic elliptic genus
$\EEE^{\rm CFT}(\H;\tau,z)$ is invariant under deformations
of the theory within the space of SCFTs of the type specialized to, 
above (see  \cite{akmw87,eoty89,dfya93,wi94} for the original
results and e.g.\ \cite[\S3.1]{diss} for a summary, including proofs).
\medskip

If a SCFT that obeys  Assumption \ref{SCFTass}
arises as a non-linear sigma model with some compact
Calabi-Yau target space $X$, then one expects 
$\EEE^{\rm CFT}(\H;\tau,z)$
to agree with the complex elliptic genus of $X$, a topological 
invariant which generalizes the {\em Hirzebruch $\chi_y$ genus} 
to a complex elliptic genus \cite{hi88,wi88,kr90} and whose definition we
recall below, see Def.~\ref{geoell}. This motivates
the following definition, which is advocated, for example,
in \cite{nawe00,we14}:
\begin{definition}\label{defk3cft}
A superconformal field theory is called a \textsc{K3 theory}, if the following
conditions hold: the SCFT is an $N=(2,2)$ superconformal field theory at
 central charges $c=6,\, \overline c=6$ with space-time supersymmetry, 
all the eigenvalues of 
the operators $J_0$ and  $\overline J_0$ on the space
of states $\H$ are integral,
and the conformal field theoretic elliptic genus of the theory
agrees with the complex elliptic genus of a K3 surface,
\begin{equation}\label{K3ellgen}
\EEE^{\rm CFT}(\H;\tau,z) 
= 8\left( \left( {\vartheta_2(\tau,z)\over \vartheta_2(\tau,0)}\right)^2
+ \left( {\vartheta_3(\tau,z)\over \vartheta_3(\tau,0)}\right)^2
+ \left( {\vartheta_4(\tau,z)\over \vartheta_4(\tau,0)}\right)^2 \right),
\end{equation}
where here and in the following, we use the standard Jacobi theta
functions $\vartheta_k(\tau,z),\; k\in\{1,\ldots,4\}$, c.f.~Appendix 
\mb{\ref{N4characters}}.
\end{definition}
Possibly, every K3 theory allows a non-linear sigma model interpretation
with target given by  some K3 surface,  but a proof is far out of reach. 
For these SCFTs, it is not hard to see and well-known to the experts\footnote{See
\cite[\S3]{we14} for a recent review, adapted to our applications.}
that the assumptions on the representation
content of such a K3 theory guarantee extended $N=(4,4)$
supersymmetry\footnote{More precisely, the relevant 
left- and right-moving superconformal 
algebras both yield a {\em small} $N=4$ superconformal algebra
according to 
\cite{aetal76}, which for simplicity, in this note, we  call 
{\em the $N=4$ superconformal algebra}.}, resonating with the
fact that every K3 surface is hyperk\"ahler. Under an assumption 
on the integrability of certain deformations, which can be justified in string theory and which
is demonstrated to all orders of perturbation theory in 
\cite{di87}, and based on the previous results \cite{se88,ce91},
the {\em moduli space of K3 theories} 
has been determined in \cite{asmo94,nawe00}. The results
of \cite{asmo94} allow to give geometric interpretations
by K3 surfaces to each theory that is accounted for in this moduli space.
Vice versa, this provides a map that assigns a unique K3 theory to every
choice of geometric moduli, in terms of a hyperk\"ahler structure,
volume and B-field on a K3 surface. To make this map explicit  is a wide open
problem, to date.
\medskip

The expectation that the conformal field theoretic elliptic genus for a non-linear sigma 
model on some Calabi-Yau manifold $X$ should agree with the 
complex elliptic genus of $X$ is rooted in the expected large volume 
behaviour of such sigma models. Following \cite{wi82,lvw89},  
the  space of Ramond ground states 
in such a large volume limit should recover the Dolbeault 
cohomology  of $X$. The work of Kapustin \cite{ka05}
shows how this expectation extends to the additional
states that are accounted for by the complex elliptic genus,
as we shall recall in Sect.~\ref{genericity}.
Since the conformal field theoretic 
elliptic genus remains invariant under deformations 
within the moduli space of theories restricted to in 
Def.~\ref{cftellgen}, the agreement must hold away from
large volume limits, as well.
\subsection{Geometric elliptic genera}\label{geometricellgen}
While the conformal field theoretic Hodge-elliptic genus does not define an invariant
under deformations of SCFTs, in any large-volume 
limit, according to \cite{katr16}, one should obtain a geometric 
version of the Hodge-elliptic genus. 
In this section, we therefore recall the definitions of the relevant 
geometric genera, according to the proposal of \cite{katr16}.
\smallskip

In the following, let $X$ denote a compact $D$-dimensional complex manifold,
and 
$E\longrightarrow X$  a holomorphic vector bundle on $X$. 
Recall that the \textsc{holomorphic Euler characteristic} of $E$ is given by
$$
\chi(E) = \sum_{j=0}^D (-1)^j \dim H^j(X,E).
$$
Following \cite{katr16}, for $u\in\C^\ast$, we may also introduce
$$
\chi^u(E) := \sum_{j=0}^D (-u)^j \dim H^j(X,E),
$$
which might be  dubbed the \textsc{Hodge-holomorphic Euler characteristic}.
For any formal variable $x$, we use the shorthand notations
$$
\Lambda_x E := \bigoplus_{p=0}^\infty x^p \Lambda^p E,\quad
S_x E:= \bigoplus_{p=0}^\infty x^p S^p E,
$$
where $\Lambda^p E,\, S^p E$ denote the $p^{\rm th}$ exterior and  
symmetric powers of $E$, respectively. 
We are now ready to define the \textsc{complex elliptic genus}
and the \textsc{complex Hodge-elliptic genus}:
\begin{definition}\label{geoell}
Let $X$ denote a compact complex $D$-dimensional  manifold, and
$T:=T^{1,0}X$ its holomorphic tangent bundle. With $q,\,y$ as in
our Assumption \mb{\ref{SCFTass}}, let 
$$
\mathbb E_{q,-y}
:= y^{-{D\over2}}  \bigotimes_{n=1}^\infty \left( \Lambda_{-yq^{n-1}} T^\ast\otimes \Lambda_{-y^{-1}q^n} T
\otimes S_{q^n} T^\ast\otimes S_{q^n} T\right),
$$
viewed as a formal power series with variables $\smash{y^{\pm{1\over2}}},\,q$,
whose coefficients are holomorphic vector bundles on $X$,
$$
\mathbb E_{q,-y} 
= y^{-{D\over2}} \bigoplus_{\ell=0}^\infty \bigoplus_{m=-D}^{D} \mathcal T_{\ell,m} (-y)^m q^\ell.
$$
Then, following \mb{\cite{hi88,wi88,kr90}}, the \textsc{complex elliptic genus} of $X$ is
$$
\EEE(X;\tau, z) := \chi( \mathbb E_{q,-y} )
= y^{-{D\over2}} \sum_{\ell=0}^\infty \sum_{m=-D}^{D} \chi(\mathcal T_{\ell,m}) (-y)^m q^\ell.
$$
With $u:=\exp(2\pi i\nu)$ for 
$\nu\in\C$,  following \mb{\cite{katr16}}, the \textsc{complex Hodge-elliptic genus} 
is\footnote{up to the prefactor $u^{-{D\over2}}$ which presumably is omitted in \cite[\S3]{katr16}
only due to a typo}
$$
\EEE_{\rm Hodge}(X;\tau, z,\nu) 
:= u^{-{D\over2}} \chi^u( \mathbb E_{q,-y} )
= (uy)^{-{D\over2}} \sum_{\ell=0}^\infty \sum_{m=-D}^{D} \chi^u(\mathcal T_{\ell,m}) (-y)^m q^\ell.
$$
\end{definition}
\noindent
From the very definition, it is clear that $\EEE_{\rm Hodge}$ is a natural and 
interesting refinement of the complex elliptic genus:
$$
\EEE_{\rm Hodge}(X;\tau,z,\nu=0) = \EEE(X;\tau,z).
$$
Note  that the holomorphic vector bundles $\mathcal T_{\ell,m}\longrightarrow X$
in the above definition, by construction, are sums of tensor products of exterior
and symmetric powers of the holomorphic tangent bundle $T$ and its dual $T^\ast$.
Nevertheless, one should expect $\EEE_{\rm Hodge}(X;\tau, z,\nu)$ to depend
on the choice of the complex structure on $X$, since the dimensions of the 
cohomology spaces $H^j(X,\mathcal T_{\ell,m})$ may jump, as the complex
structure of $X$ varies (see, e.g., \cite{huy95,bph92,amp12} for some examples
of this phenomenon\footnote{We thank Emanuel Scheidegger for pointing these
references out to us.}). 
For a general Calabi-Yau manifold $X$, we  
therefore always assume
that a fixed complex structure has been chosen.
However, if $X$ is a K3 surface,
then the Bochner principle suffices to prove that 
$\EEE_{\rm Hodge}(X;\tau, z,\nu)$ is independent of this choice, as is shown in
\cite{katr16}. Moreover, for complex tori $X=T^D$, the holomorphic tangent bundle is
trivial, yielding $\EEE_{\rm Hodge}(X;\tau, z,\nu)$ independent of the choice of the complex
structure, as well, as we shall confirm in Sect.~\ref{warm}.
\medskip

In \cite{katr16}, it is claimed that one should expect that 
$\EEE_{\rm Hodge}(X;\tau, z,\nu)$
agrees with the conformal field theoretic Hodge-elliptic
genus of sigma models on $X$ in a large volume limit. 
As we shall see in Prop.~\ref{chiralisnotHodge}, below, this claim does not hold when $X$ is
a K3 surface, unless K3 theories generically
have an extended  chiral algebra beyond the 
$N=4$ superconformal algebra at central charge $c=6$. 
In fact, in Sect.~\ref{comments}, we show that such a
generically extended chiral algebra would contradict the results of Kapustin \cite{ka05}
on large volume limits of topologically half twisted sigma models.
As we shall explain in Sect.~\ref{genericity},  
we do not expect the claim to hold in general, except
for few examples, like complex tori. 
\subsection{Generic genera?}\label{genericity}
In this section, for a compact Calabi-Yau manifold $X$
that obeys certain additional assumptions,
we introduce the notions of {\em generic 
conformal field theoretic Hodge-elliptic genus} and
of  {\em chiral Hodge-elliptic genus}. 
In the string theory literature, the prior would  probably rather be called
an {\em infinite volume limit} of the conformal field theoretic Hodge-elliptic
genus of sigma models on $X$, 
and we expect it to agree with the chiral Hodge-elliptic genus of $X$,
in general.
\smallskip
 
Assume that for some compact $D$-dimensional Calabi-Yau manifold $X$,
we have a notion of a moduli space of SCFTs 
that obey our Assumption \ref{SCFTass} and that allow a 
geometric interpretation by  $X$.  This is the case,
for example, if $X$ is a complex torus or a K3 surface, using 
\cite{cent85,na86,asmo94,nawe00} and Def.~\ref{defk3cft},
where the geometric moduli are naturally expressed in terms
of a hyperk\"ahler structure, the volume and a B-field on $X$. 
In general,
we lack a clean mathematical definition of the appropriate
moduli spaces. The more mathematically inclined reader
is therefore invited to restrict their attention to tori and K3 surfaces,
at least in the context of the generic conformal field theoretic
Hodge-elliptic genus introduced in Def.~\ref{genericCFTHEG}, below.

However, it is largely believed that for 
general Calabi-Yau manifolds $X$,
there is a notion of a moduli space of SCFTs arising as 
non-linear sigma models on $X$. If $X$ is not hyperk\"ahler,
then the moduli space, at least locally, 
is expected to decompose into a product of complex
structure and (complexified) K\"ahler parameter spaces of $X$  \cite{di87,digr88}.  
Mathematically, these two factors exhibit quite distinct behaviours. Since 
both the complex and the chiral Hodge elliptic genus  of $X$
are expected to depend on the complex structure of $X$, in general,
we then assume that $X$ is equipped with a fixed choice of complex 
structure, requiring the {\em\mb{(}complexified\mb{)} K\"ahler moduli} to be generic 
when referring to generic quantities.
If $X$ is hyperk\"ahler, then the situation is different, since 
similarly to the K3 case, the relevant moduli spaces are not even locally
expected to decompose into 
a product of complex structure and (complexified) K\"ahler moduli
spaces. For non-linear sigma models on 
hyperk\"ahler manifolds $X$ we therefore consider {\em all moduli}, 
when we refer to  generic quantities,
so such $X$ is  not assumed to be equipped with a fixed complex 
structure. This distinction is motivated by the 
treatment of infinite volume limits in the string theory literature, particularly 
in \cite{katr16}, and by the fact that for K3 surfaces, both the complex
and the chiral Hodge-elliptic genus are independent of the complex structure,
as we shall see below. In fact, the conformal field theoretic Hodge-elliptic genus 
at generic (complexified) K\"ahler moduli is also independent of all complex
structure moduli for SCFTs that arise as non-linear sigma models
with  some complex torus as target
(see Prop.~\ref{torusgenera}). The same behaviour is predicted 
for K3 theories by string theory,
as we shall see in the proof of Prop.~\ref{assumptionproved}.
It is left for future work to decide whether for higher-dimensional hyperk\"ahler 
manifolds $X$,
a  refinement of our notion of generic conformal field theoretic
Hodge-elliptic genus is more adequate.

We add the assumption that spectral
data depend on the moduli at least continuously, which is known to
hold true for complex tori and orbifolds thereof.
Since in any SCFT that obeys our Assumption \ref{SCFTass}, 
by Def.~\ref{cftellgen},  $\widetilde\H^R$ is the
kernel of the linear operator $\qu L_0-{D\over8}\mb{id}$ on $\H^R$,
the  coefficients of the conformal field theoretic Hodge-elliptic genus
$\EEE_{\rm Hodge}^{\rm CFT}(\H;\tau, z,\nu)$ are the dimensions of common
eigenspaces of the linear operators 
$\qu L_0-{D\over8}\mb{id}$,
$L_0-{D\over8}\mb{id}$,  $J_0$ and $\qu J_0$
on $\H^R$. By our assumptions,  as was explained in the
discussion of Def.~\ref{cftellgen}, all the eigenvalues
of these linear operators on $\widetilde\H^R$
are restricted to values in ${1\over2}\Z$. 
This means that  the {\em generic} dimensions of these 
eigenspaces in $\H^R$ on our moduli space yield the 
{\em maximal lower bounds} of these dimensions\footnote{This behaviour is well known from 
dimension theory in commutative algebra and is sometimes
called {\em upper semicontinuity}, e.g.\  in \cite{katr16}.}.
Let us introduce a generating function for these generic dimensions:
\begin{definition}\label{genericCFTHEG}
Let $X$ denote a compact Calabi-Yau manifold 
of dimension $D$.
Consider the moduli space of SCFTs that obey Assumption \mb{\ref{SCFTass}} 
and that allow a geometric interpretation by $X$. 
Here, we assume that $X$ is equipped with a fixed complex
structure, unless $X$ is hyperk\"ahler. In the hyperk\"ahler case, 
we only fix the diffeomorphism type of $X$.

\noindent
For $h\in{D\over8}+\N$ and $Q,\qu Q\in {D\over2}+\Z$,
and with notations as in Def.~\mb{\ref{cftellgen}}, let
$$
N_{h,Q,\qu Q} := \inf\left( \dim\left\{ \phi\in\widetilde\H^{\rm R} \mid 
L_0\phi = h\phi, \; J_0\phi=Q\phi,  \; \qu J_0\phi=\qu Q\phi  \right\} \right), 
$$
where the infimum is taken over all SCFTs within the moduli space. Then
the \textsc{generic conformal field theoretic
Hodge-elliptic genus} of $X$ is given by
$$
\EEE^0_{\rm Hodge}(X;\tau,z,\nu) 
:= \sum_{h,Q,\qu Q} (-1)^{Q-\qu Q} N_{h,Q,\qu Q} \cdot y^Q u^{\qu Q}q^{h-{D\over8}} ,
$$
where the sum runs over $h\in{D\over8}+\N$ and $Q,\qu Q\in {D\over2}+\Z$.
\end{definition}
By the above definition of the generic conformal field theoretic
Hodge-elliptic genus, for a Calabi-Yau manifold $X$ and for every
SCFT in the moduli space of theories with geometric interpretation by  $X$ 
and with space of states $\H$ as in
Def.~\ref{cftellgen}, there is a space $\widehat\H^R$ which injects
into $\widetilde\H^R$, $\widehat\H^R\hookrightarrow\widetilde\H^R$,
with the following two properties: first,
\begin{equation}\label{genericsubspace}
\EEE^0_{\rm Hodge}(X;\tau,z,\nu)
= \tr\nolimits_{\widehat\H^{\rm R}}\left( (-1)^{J_0-\overline J_0}y^{J_0} u^{\overline J_0}
q^{L_0-{D\over8}} \right),
\end{equation}
where $\widehat\H^{\rm R}$ is a representation
of the left-moving $N=2$ superconformal algebra, extended by $\qu J_0$. 
Second, if the superconformal field theories in our moduli space share a common
extended chiral algebra $\AAA$, then $\widehat\H^{\rm R}$ is a representation
of $\AAA$, extended by $\qu J_0$. 

At fixed values of $h\in{D\over8}+\N$, $Q\in{D\over2}+\Z$, by the defining 
properties of our SCFTs, there are only finitely many values of $\qu Q\in{D\over2}+\Z$
such that $N_{h,Q,\qu Q}\neq0$. Hence generically, the common eigenspaces
of $L_0,\, J_0,\, \qu J_0$ in $\widetilde\H^{\rm R}$ with eigenvalues $h,\,Q,\,\qu Q$
at fixed $h,\, Q$, for all $\qu Q$ have dimension $N_{h,Q,\qu Q}$. 
Since the conformal-field theoretic elliptic genus is invariant
on the moduli space, we therefore have
\begin{equation}\label{genericsubspaceEG}
\EEE^{\rm CFT}(\H;\tau,z) 
= \tr\nolimits_{\widehat\H^{\rm R}} \left( (-1)^{J_0-\qu J_0} y^{J_0} q^{L_0-{D\over8}} \right)
\ee{genericsubspace}
\EEE^0_{\rm Hodge}(X;\tau,z,\nu=0).
\end{equation}
As mentioned above, in string theory\footnote{see, for 
example, \cite[\S3, page 253]{katr16}}, 
$\EEE^0_{\rm Hodge}$ would 
probably be called an {\em infinite volume limit} of the conformal 
field theoretic Hodge-elliptic genus, which for general Calabi-Yau
manifolds refers to a point in an appropriate 
boundary of the moduli space where the dependence on all
(complexified) K\"ahler parameters is lost. For our discussion,
it is more appropriate to focus on {\em generic} values of these
parameters instead of a limit where they are infinite.
For K3 surfaces, by our Definition \ref{genericCFTHEG},
we consider all moduli of K3 theories at generic values. 
Then, the notion of \textsl{infinite volume limit}  seems
a little out of place, and we prefer the notion of \textsl{generic space 
of states}.

It is important to note  that the very restrictions on the spectra of 
$J_0$ and $\qu J_0$ on $\H^R$ in our Assumption \ref{SCFTass}
are crucial in order to expect any meaningful quantity to arise
from Def.~\ref{genericCFTHEG}. 
Indeed, generic dimensions of common eigenspaces
of $\qu L_0,\,L_0,\,J_0$ and $\qu J_0$, for generic choices
of eigenvalues, are zero.
In our view, this is why Kachru's and Tripathy's conformal field 
theoretic Hodge-elliptic genus $\EEE_{\rm Hodge}^{\rm CFT}(\H;\tau, z,\nu)$ 
turns out to be so useful, in contrast to the
partition function.

As was mentioned at the end of Sect.~\ref{geometricellgen},
in Prop.~\ref{chiralisnotHodge}, under one additional assumption which
is commonly believed to hold true, we will disprove the 
expectation of \cite{katr16} that for K3 surfaces $X$,
the generic conformal field theoretic Hodge-elliptic genus
agrees with the complex Hodge-elliptic genus $\EEE_{\rm Hodge}(X;\tau,z,\nu)$
of Def.~\ref{geoell}. 
Indeed, we see no reason for the two quantities to agree, 
in general. Let us now explain why this is so, and offer an alternative 
proposal.
\medskip

In his beautiful work \cite{ka05}, Kapustin has proposed
the following relationship between the non-linear sigma model on
a given compact Calabi-Yau manifold $X$ and  
the (holomorphic) {\em chiral de Rham complex } 
$\Omega_X^{\rm ch}$ of $X$
that was introduced by Malikov, Schechtman, Vaintrob 
\cite{msv98,bo01,boli00,goma03,lili07,bhs08}: Kapustin
 argues that an {\em infinite volume limit}
of the model obtained from the sigma-model by 
a topological half-twist 
\cite{egya90,wi91}  can be identified with
the sheaf cohomology of $\Omega_X^{\rm ch}$. 

Let us recall and explain this in more detail. 
The chiral de Rham complex is a sheaf of vertex operator algebras
which can be defined on any compact complex manifold $X$.
It possesses a bigrading by globally defined 
operators $L_0^{\rm top}$ and $J_0$ and  a 
compatible natural filtration. 
Both these operators descend to the (sheaf) cohomology $H^\ast(X,\Omega_X^{\rm ch})$ of
the chiral de Rham complex, inducing a $\Z_2$-grading by $(-1)^{J_0+j}$ on $H^j(X,\Omega_X^{\rm ch})$.
According to \cite{msv98,bo01,boli00},
the associated graded object for $\Omega_X^{\rm ch}$ is isomorphic to 
the sheaf of sections of 
the virtual bundle $(-y)^{D\over2}\E_{q,y}$ on $X$ that was used in the
Definition \ref{geoell} of the complex elliptic genus of $X$. This implies that 
the latter, given by
the holomorphic Euler characteristic of $\E_{q,-y}$, agrees with the 
graded Euler characteristic of the chiral de Rham complex,
\begin{equation}\label{elliptichiderham}
\EEE(X;\tau,z)
= y^{-{D\over2}} \sum_{j=0}^D (-1)^j \tr\nolimits_{H^j(X,\Omega^{\rm ch}_X)} 
\left( (-1)^{J_0} y^{J_0}  q^{L_0^{\rm top}} \right).
\end{equation}
Note that $H^\ast(X,\Omega_X^{\rm ch})$ carries the action of a
{\em topological}\footnote{that is, topologically twisted, according to \cite{egya90}} $N=2$
superconformal algebra at rank $3D$, according to 
\cite[Prop.~3.7 and Def.~4.1]{bo01}, which 
is extended to an $N=4$ superconformal algebra 
if $X$ is hyperk\"ahler \cite{bhs08,he09} (see also \cite[\S2]{so17}).

Kapustin's interpretation of the cohomology of the chiral de Rham complex
$\Omega_X^{\rm ch}$ as infinite volume limit of the topologically half-twisted
sigma model on $X$ explains why one might expect the complex 
elliptic genus of $X$ to agree with the  conformal
field theoretic elliptic genus of the sigma model. Note that  
$H^\ast(X,\Omega_X^{\rm ch})$ is thus interpreted as
infinite volume limit of the {\em Neveu-Schwarz sector} 
of the sigma model, after the topological half-twist. The latter is
indicated in (\ref{elliptichiderham}) by the
lack of anti-holomorphic contributions. The use of $L_0^{\rm top}=L_0-{1\over2}J_0$,
with $L_0$ the untwisted Virasoro zero-mode, accounts for the fact
that $H^\ast(X,\Omega_X^{\rm ch})$ naturally carries the action of a 
{\em topologically twisted} $N=2$ superconformal algebra.
To obtain the elliptic genus, one needs to perform a spectral flow
from the Neveu-Schwarz to the Ramond sector. This is reflected
in (\ref{elliptichiderham})  by the fact that the trace of
$(-1)^{J_0}y^{J_0}q^{L_0^{\rm top}}
=(-1)^{J_0}q^{D\over8}(yq^{-{1\over2}})^{J_0}q^{L_0-{D\over8}}$
is taken, instead of $(-1)^{J_0}y^{J_0}q^{L_0-{D\over8}}$.

This reasoning also implies that
$H^j(X,\E_{q,-y})$ cannot be expected to 
be isomorphic to (a graded object of) 
$H^j(X,\Omega_X^{\rm ch})$. In fact, by the above,
$H^j(X,\E_{q,-y})$ arises on the first sheet of the spectral sequence
that is obtained from our filtered complex, while $H^j(X,\Omega_X^{\rm ch})$
requires the limit of that spectral sequence.
Therefore, we do not 
expect the respective Hodge-elliptic genera
to agree, not even in an infinite volume limit. As an 
alternative, we propose to define a Hodge-elliptic genus
using the (holomorphic)
chiral de Rham complex. To fully appreciate such a definition, 
a few more details about this sheaf are helpful.
First note that in \cite{ka05}, a {\em Dolbeault resolution} of
the chiral de Rham complex is introduced, which essentially extends
$\Omega_X^{\rm ch}$ by  
additional {\em anti-holomorphic fields}. 
All these additional fields are assumed to be constant. 
As Grimm explains in his thesis \cite[\S5.1]{gr16},
this restricts one
to the realm of real analytic differential forms, where one lacks a partition of unity.
The relevant sheaves, therefore, are {\em not fine}. Also inspired 
by the works of  Lian and Linshaw \cite{lili07},
Grimm instead constructs a resolution that sits in between the two that
are suggested in \cite{ka05,lili07}, respectively. He proves that the
resulting sheaves yield an acyclic resolution, thus allowing him access to 
explicit calculations of chiral de Rham cohomology,
in some examples. The important point, for
our purposes, is the existence of a well-defined operator 
$\qu J_0$ on the sections of the chiral de Rham complex, 
which descends to cohomology, as follows from Grimm's construction,
with $H^j(X,\Omega_X^{\rm ch})$ arising as 
the kernel of $\qu J_0-j\,\mb{id}$ 
in $H^\ast(X,\Omega_X^{\rm ch})$ by \cite[Thm.~5.1.7]{gr16}.
As was mentioned before, this is already implicit in the constructions
of \cite{bo01,boli00}, where the $\Z_2$-grading by what we now recognize
as\footnote{$(-1)^{J_0-\qu J_0}=(-1)^{J_0+\qu J_0}$ since by Assumption
\ref{SCFTass}, all eigenvalues of
$\qu J_0$ in the Neveu-Schwarz sector are  integral.} 
$(-1)^{J_0-\qu J_0}$ is introduced.
This motivates the following
\begin{definition}\label{chiralHEG}
Let $X$ denote a compact Calabi-Yau manifold of dimension $D$,
and $\Omega_X^{\rm ch}$ its \mb{(}holomorphic\mb{)}
chiral de Rham complex. The 
\textsc{chiral Hodge-elliptic genus} of $X$ is defined by
\begin{eqnarray*}
\EEE^{\rm ch}_{\rm Hodge}(X;\tau,z,\nu)
&:=& (yu)^{-{D\over2}} \tr\nolimits_{H^\ast(X,\Omega^{\rm ch}_X)} 
\left( (-1)^{J_0-\qu J_0} y^{J_0} u^{\qu J_0} q^{L_0^{\rm top}} \right)\\
&=& (yu)^{-{D\over2}} \sum_{j=0}^D (-u)^j
\tr\nolimits_{H^j(X,\Omega^{\rm ch}_X)} 
\left( (-1)^{J_0} y^{J_0}  q^{L_0^{\rm top}} \right).
\end{eqnarray*}
\end{definition}
Let us discuss some of the properties of this new Hodge-elliptic genus.
The chiral Hodge-elliptic genus is a natural refinement of
the complex elliptic genus, as is immediate from the above
Def.~\ref{chiralHEG} along with (\ref{elliptichiderham}):
\begin{equation}\label{chHEG0EG}
\EEE^{\rm ch}_{\rm Hodge}(X;\tau,z,\nu=0)=\EEE(X;\tau,z).
\end{equation}
Following Kapustin, $H^\ast(X,\Omega^{\rm ch}_X)$
is interpreted as infinite volume limit of 
the Neveu-Schwarz sector of a topologically half-twisted sigma model
on $X$, which as advertised above should be 
a space that can be injectively mapped into the space of states 
of any sigma model on $X$. In this sense, $H^\ast(X,\Omega^{\rm ch}_X)$
is a common subspace of all such theories.
Spectral flow maps this space to 
$\widehat\H^{\rm R}\hookrightarrow\widetilde\H^{\rm R}$
in (\ref{genericsubspace}), where the restriction
to $\ker\left(\qu L_0-{\qu c\over24}\cdot\mb{id}\right)$
with $\qu c=3D$
in the Definition \ref{cftellgen} of $\widetilde\H^{\rm R}$
implements the topological half-twist, as mentioned above. 
This explains why we expect
the {\em chiral Hodge-elliptic genus} of Def.~\ref{genericischiral},
rather than the complex Hodge-elliptic genus of \cite{katr16},
to agree with the generic conformal field theoretic 
Hodge-elliptic genus of $X$:
\begin{conjecture}\label{genericischiral}
Consider a compact Calabi-Yau manifold $X$. The
generic conformal field theoretic Hodge-elliptic genus 
of SCFTs with geometric interpretation by $X$
\mb{(}Def.~\mb{\ref{genericCFTHEG})}
agrees with the chiral Hodge-elliptic genus of $X$
\mb{(}Def.~\mb{\ref{chiralHEG})},
$$
\EEE^0_{\rm Hodge}(X;\tau,z,\nu) = \EEE^{\rm ch}_{\rm Hodge}(X;\tau,z,\nu).
$$
\end{conjecture}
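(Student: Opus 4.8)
The plan is to prove the identity coefficient by coefficient, by identifying the graded sheaf cohomology $H^\ast(X,\Omega_X^{\rm ch})$ with the generic subspace $\widehat\H^{\rm R}$ of (\ref{genericsubspace}) that computes $\EEE^0_{\rm Hodge}$, and I would exploit Kapustin's identification \cite{ka05} of the infinite volume limit of the topologically half-twisted sigma model on $X$ with $H^\ast(X,\Omega_X^{\rm ch})$ in its two complementary aspects. First, Kapustin's reasoning provides, after spectral flow, an injection of $H^\ast(X,\Omega_X^{\rm ch})$ into $\widehat\H^{\rm R}\hookrightarrow\widetilde\H^{\rm R}$ for \emph{every} SCFT in the moduli space; since by upper semicontinuity each $N_{h,Q,\qu Q}$ of Def.~\ref{genericCFTHEG} is a universal lower bound for the dimension of the corresponding common eigenspace in any single theory, this injection forces every graded dimension of $H^\ast(X,\Omega_X^{\rm ch})$ to be at most $N_{h,Q,\qu Q}$. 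Second, Kapustin's stronger claim that this limit \emph{equals} $H^\ast(X,\Omega_X^{\rm ch})$, rather than a proper subspace, shows that the lower bound is attained, so that each graded dimension equals $N_{h,Q,\qu Q}$; this two-sided estimate then yields the asserted equality of generating functions.

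The second, essentially bookkeeping, step is to check that the gradings entering the two genera are intertwined correctly by the spectral flow from the Neveu-Schwarz picture, in which $\Omega_X^{\rm ch}$ naturally lives, to the Ramond sector, in which $\EEE^0_{\rm Hodge}$ is evaluated. On the chiral side the relevant operators are $J_0$, $\qu J_0$ and $L_0^{\rm top}=L_0-\tfrac12 J_0$; by Grimm's result \cite[Thm.~5.1.7]{gr16} the right-moving charge acts as $\qu J_0=j\cdot\mb{id}$ on $H^j(X,\Omega_X^{\rm ch})$, so the factor $u^{\qu J_0}$ of Def.~\ref{chiralHEG} records the cohomological degree. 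Applying the holomorphic and anti-holomorphic spectral flows, which shift $J_0\mapsto J_0-\tfrac D2$ and $\qu J_0\mapsto\qu J_0-\tfrac D2$ and carry the $q$-grading $q^{L_0^{\rm top}}$ to the Ramond form $q^{h-D/8}$ with $h\in\tfrac D8+\N$ the Ramond $L_0$-eigenvalue, as reflected by the rewriting $(-1)^{J_0}y^{J_0}q^{L_0^{\rm top}}=(-1)^{J_0}q^{D/8}(yq^{-1/2})^{J_0}q^{L_0-D/8}$ underlying (\ref{elliptichiderham}), I would verify that the charges $Q=J_0-\tfrac D2$ and $\qu Q=\qu J_0-\tfrac D2=j-\tfrac D2$ land in the cosets $\tfrac D2+\Z$ required by Def.~\ref{genericCFTHEG}, that the fermion number $(-1)^{Q-\qu Q}=(-1)^{J_0-\qu J_0}$ is preserved, and that the prefactor $(yu)^{-D/2}$ precisely implements the flow of both $U(1)$ charges. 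Collecting terms then matches the coefficient of $y^Qu^{\qu Q}q^{h-D/8}$ on the chiral side with $(-1)^{Q-\qu Q}$ times the graded dimension, exactly as in Def.~\ref{genericCFTHEG}.

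The hard part will be to make Kapustin's two claims rigorous, for it is precisely here that the argument leaves firm mathematical ground: one must both construct the infinite volume limit of the half-twisted model and exhibit the promised injection into every $\widetilde\H^{\rm R}$, and also prove that the limit exhausts $H^\ast(X,\Omega_X^{\rm ch})$ so that the universal lower bound is genuinely attained. For a general Calabi-Yau manifold neither a clean definition of the moduli space of SCFTs nor a proof of the requisite continuity of spectral data is at hand, which is why the statement can at present only be conjectured. I would therefore first settle the accessible cases: for complex tori the holomorphic tangent bundle is trivial and a direct computation (Sect.~\ref{warm}) confirms the identity, whereas for K3 surfaces the missing analytic input can be supplied by Bailin Song's determination of the global holomorphic sections of $\Omega_X^{\rm ch}$ \cite[Thm.~1.2]{so16} together with the $N=4$ structure on $H^\ast(X,\Omega_X^{\rm ch})$ \cite{bo01,so17}, as carried out in Sect.~\ref{comments}. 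This pattern suggests that a proof for general $X$ will require, in each case, a sufficiently explicit description of $H^\ast(X,\Omega_X^{\rm ch})$ as a module over the relevant extended superconformal algebra to pin down the generic multiplicities $N_{h,Q,\qu Q}$ directly.
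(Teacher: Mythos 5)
Your proposal takes essentially the same route as the paper: the statement is left as a conjecture there as well, motivated precisely by Kapustin's identification \cite{ka05} of the infinite volume limit of the half-twisted sigma model with $H^\ast(X,\Omega_X^{\rm ch})$, together with the same spectral-flow bookkeeping relating $(J_0,\qu J_0,L_0^{\rm top})$ and the prefactor $(yu)^{-D/2}$ to the Ramond-sector gradings of Def.~\ref{genericCFTHEG}, and with the same two-sided squeeze (injection into every theory versus exhaustion at the limit) that underlies Prop.~\ref{assumptionproved}. Your fallback to the rigorously accessible cases likewise coincides with the paper's: the direct computation for complex tori in Prop.~\ref{torusgenera}, and the K3 case via Song's result \cite[Thm.~1.2]{so16} in Prop.~\ref{proposedHEG}.
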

In Sect.~\ref{warm}, we will see that it is straightforward to 
prove this conjecture if $X$ is a complex torus, see 
Prop.~\ref{torusgenera}. Moreover,
Prop.~\ref{proposedHEG} shows 
that for K3 surfaces $X$, this conjecture holds under 
the natural assumption that the generic
chiral algebra of all K3 theories is precisely the $N=4$
superconformal vertex operator algebra at central 
charge $c=6$. On the other hand, 
Prop.~\ref{chiralisnotHodge} states that under the
same assumption, the complex Hodge-elliptic genus of K3 {\em differs}
from the generic conformal field theoretic Hodge-elliptic
genus of K3 theories. Finally, Prop.~\ref{assumptionproved}
shows that our assumption on the generic chiral algebra of K3
theories holds true if the (sheaf) cohomology
of the (holomorphic) chiral de Rham complex of a K3 surface $X$
can indeed be identified with the infinite volume limit
of topologically half-twisted sigma models on $X$ in the sense
explained above, and as argued by Kapustin \cite{ka05}.
\smallskip

There is no reason to expect the chiral Hodge-elliptic
genus to be independent of the choice of complex structure 
on $X$. However, in Props.~\ref{torusgenera} and \ref{proposedHEG}
we will see that $\EEE_{\rm Hodge}^{\rm ch}(X;\tau,z,\nu)$ is independent
of that choice if $X$ is a complex torus or a K3 surface. In this respect,
the chiral Hodge-elliptic genus behaves analogously to the complex
Hodge-elliptic
genus of \cite{katr16}.

Finally
recall that for superconformal field theories which obey our Assumption \ref{SCFTass},
on the level of the $N=(2,2)$ superconformal algebra
with generating fields  $(T,J,G^+,G^-; \qu T,\qu J,\qu G^+,\qu G^-)$,
{\em mirror symmetry}
acts as an outer automorphism induced by
\begin{eqnarray*}
(T,J,G^+,G^-; \qu T,\qu J,\qu G^+,\qu G^-)&\longmapsto& (T,-J,G^-,G^+; \qu T,\qu J,\qu G^+,\qu G^-)\\
\quad\mb{ or }\\
(T,J,G^+,G^-; \qu T,\qu J,\qu G^+,\qu G^-)&\longmapsto& (T,J,G^+,G^-; \qu T,-\qu J,\qu G^-,\qu G^+),
\end{eqnarray*}
where the choice between the two should solely amounts to a choice of normalization
\cite{lvw89}.  Assumption \ref{SCFTass} includes the requirement that
in the Ramond sector, all eigenvalues of $J_0,\,\qu J_0$ belong to ${D\over2}+\Z$.
Therefore,
if Conjecture \ref{genericischiral} holds, then it  in particular implies
\begin{conjecture}\label{mirror}
Let $(X,\widetilde X)$ denote a mirror pair of compact Calabi-Yau manifolds
of complex dimension $D$. 
Then the corresponding chiral Hodge-elliptic genera obey 
$$
\EEE_{\rm Hodge}^{\rm ch}(\widetilde X;\tau, z,\nu)
=(-1)^D \EEE_{\rm Hodge}^{\rm ch}(X;\tau, -z,\nu)
=(-1)^D \EEE_{\rm Hodge}^{\rm ch}(X;\tau, z,-\nu).
$$
\end{conjecture}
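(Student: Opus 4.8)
The plan is to deduce the statement from Conjecture \ref{genericischiral} together with the explicit action of mirror symmetry recorded immediately above it. By Conjecture \ref{genericischiral} it suffices to prove the two identities with the generic conformal field theoretic Hodge-elliptic genus $\EEE^0_{\rm Hodge}$ in place of $\EEE^{\rm ch}_{\rm Hodge}$, and to this end I would work with the trace formula (\ref{genericsubspace}) on the common space $\widehat\H^{\rm R}$. Mirror symmetry is an isomorphism of the moduli space of SCFTs with geometric interpretation by $X$ onto the corresponding moduli space for $\widetilde X$, realized on each theory by one of the two outer automorphisms displayed above. Since this isomorphism preserves genericity, it identifies the infima $N_{h,Q,\qu Q}$, and hence the spaces $\widehat\H^{\rm R}$, attached to $X$ and to $\widetilde X$; the only effect on (\ref{genericsubspace}) is the relabelling of the zero modes entering the trace.

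First I would treat the left-moving automorphism, under which $L_0$, $\qu L_0$ and $\qu J_0$ are fixed while $J_0\mapsto -J_0$. Writing $J_0^{(\widetilde X)}=-J_0^{(X)}$ on $\widehat\H^{\rm R}$ and substituting into (\ref{genericsubspace}) for $\widetilde X$, the factor $y^{J_0^{(\widetilde X)}}=y^{-J_0}$ accounts for the replacement $z\mapsto -z$. The remaining task is to track the sign: on an eigenspace with charges $(Q,\qu Q)$ one has $(-1)^{-Q-\qu Q}=(-1)^{Q-\qu Q}(-1)^{-2\qu Q}$, and since Assumption \ref{SCFTass} forces $\qu Q\in{D\over2}+\Z$, hence $2\qu Q\in D+2\Z$, one finds $(-1)^{-2\qu Q}=(-1)^{D}$. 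This extracts exactly the prefactor $(-1)^{D}$ and yields the first equality $\EEE^0_{\rm Hodge}(\widetilde X;\tau,z,\nu)=(-1)^{D}\EEE^0_{\rm Hodge}(X;\tau,-z,\nu)$.

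For the second equality I would repeat the argument with the right-moving automorphism, under which now $\qu J_0\mapsto -\qu J_0$ while $L_0$ and $J_0$ are unchanged. The factor $u^{\qu J_0^{(\widetilde X)}}=u^{-\qu J_0}$ produces the replacement $\nu\mapsto -\nu$, and the sign becomes $(-1)^{J_0-\qu J_0^{(\widetilde X)}}=(-1)^{Q+\qu Q}=(-1)^{D}(-1)^{Q-\qu Q}$, again isolating the prefactor $(-1)^{D}$ by the same parity argument on $2\qu Q$. Because both outer automorphisms implement the same mirror symmetry, and so have the same mirror partner $\widetilde X$ on the left-hand side, the two computations share a common left-hand side and together give the full chain of equalities. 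Transporting the result back through Conjecture \ref{genericischiral} then establishes the claim for $\EEE^{\rm ch}_{\rm Hodge}$.

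I expect the genuine content, and the main obstacle, to lie not in the sign bookkeeping, which is routine once the parity of $2\qu Q$ is used, but in justifying that the two automorphisms really do produce the same mirror theory $\widetilde X$ and that mirror symmetry preserves genericity, so that the infima $N_{h,Q,\qu Q}$ defining $\widehat\H^{\rm R}$ are matched correctly on the two sides. This is precisely the structural input that makes the statement conditional on Conjecture \ref{genericischiral}: absent that conjecture one would be forced to argue directly on the sheaf cohomology $H^\ast(X,\Omega_X^{\rm ch})$, where no analogue of the mirror automorphism is manifest.
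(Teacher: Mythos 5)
Your proposal follows essentially the same route as the paper, which states Conjecture \ref{mirror} precisely as a consequence of Conjecture \ref{genericischiral} together with the outer-automorphism description of mirror symmetry and the charge quantization $Q,\qu Q\in{D\over2}+\Z$ from Assumption \ref{SCFTass}; your sign bookkeeping via $(-1)^{2\qu Q}=(-1)^{D}$ correctly fills in the computation the paper leaves implicit. The paper offers no further proof beyond this implication, so your more detailed write-up, including the observation that the two automorphisms differ only by a choice of normalization, is a faithful elaboration of the intended argument.
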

This is an immediate generalization of the corresponding behaviour of
the elliptic genera for mirror pairs of Calabi-Yau manifolds, which in 
turn is well-established, see e.g.\ \cite[\S1($\ast$)]{boli00}.
\section{Warmup: Hodge-elliptic genera for tori}\label{warm}
As a warm up, in this section we  discuss the complex elliptic genus
and the Hodge-elliptic genera for complex tori $T^D$ of any dimension $D$,
all of which can be calculated explicitly with little effort.
Although for the experts,  this is an easy exercise, we find it useful
to include its solution in this note, since  for the Hodge-elliptic genera introduced in the 
previous section it  may also explain some of
the misconceptions in the literature.
\smallskip

First, since the holomorphic tangent bundle of $T^D$ is trivial,
by classical index theory, the complex elliptic genus of $T^D$ is
\begin{equation}\label{toruscpxellg}
\EEE(T^D;\tau,z) = 0.
\end{equation}
The complex Hodge-elliptic genus
$\EEE_{\rm Hodge}(T^D;\tau,z)$ can also be explicitly
calculated
(c.f.\  \cite[\S3]{katr16}, whose normalization of $\vartheta_1(\tau,z)$ differs by
a prefactor of $-i$ from ours),
\begin{equation}\label{toruskatr}
\EEE_{\rm Hodge}(T^D;\tau,z) = \left( {-i\vartheta_1(\tau,z)\over\eta(\tau)^3} 
\cdot(u^{-{1\over2}}-u^{{1\over2}})\right)^D.
\end{equation}
Furthermore, the sheaf cohomology of the (holomorphic)
chiral de Rham complex
for $T^D$ consists of the classical Dolbeault cohomology of the torus,
tensorized by polynomials in $D$ copies of
the modes $(b_m,\, a_m,\, \Phi_m, \Psi_m)_{m>0}$ of a bc-$\beta\gamma$ 
system
\cite[Thm.~5.2.1]{gr16}. On the classical Dolbeault cohomology
$H^{j,k}(T^D,\C)\hookrightarrow H^k(T^D,\Omega_{T^D}^{\rm ch})$, 
one has $L_0^{\rm top}\equiv 0,\; J_0\equiv j\cdot\mb{id},\;
\qu J_0\equiv k\cdot\mb{id}$, and thus
\begin{align}\label{torusHodgechiral}
\EEE_{\rm Hodge}^{\rm ch}(T^D;&\tau,z,\nu) \nonumber\\
&\eet{Def.~\ref{chiralHEG}}\; 
(yu)^{-{D\over2}} \sum_{j,k=0}^D (-1)^{j-k} {D\choose j}{D\choose k} y^j u^k
\cdot \prod_{m=1}^\infty \left( {(1-yq^m)(1-y^{-1}q^m)\over(1-q^m)^2} \right)^D \nonumber\\
&\eet{App.~\ref{N4characters}}\; 
\left( {-i\vartheta_1(\tau,z)\over\eta(\tau)^3} 
\cdot(u^{-{1\over2}}-u^{{1\over2}})\right)^D\\
&\ee{toruskatr}\;\;\;\;
\EEE_{\rm Hodge}(T^D;\tau,z,\nu).\nonumber
\end{align}
Note that this result is compatible with Conjecture \ref{mirror}, since the mirror
of a complex $D$-torus $T^D$ is a complex $D$-torus.
\medskip

To compare to the results obtained from the conformal field theoretic Hodge-elliptic
genus, recall that any non-linear sigma model with target $T^D$ and space
of states $\H$ has a partition function whose $\widetilde{\rm R}$-sector 
(c.f.\ Assumption \ref{SCFTass})
takes the following form:
\begin{equation}\label{torusparti}
Z_{\widetilde{\rm R}}^{T^D}(\tau,z) 
= \displaystyle Z_\Gamma(\tau)  \left| {\vartheta_1(\tau,z)\over\eta(\tau)} \right|^{2D},
\quad\mb{ where }\quad
Z_\Gamma(\tau)
=\sum_{\gamma=(\gamma_L,\gamma_R)\in\Gamma} 
{q^{{1\over2}\gamma_L\cdot\gamma_L}\overline q^{{1\over2}\gamma_R\cdot\gamma_R}\over
|\eta(\tau)|^{4D}}
\end{equation}
depends on the moduli of the theory via the charge lattice 
$\Gamma\subset\R^{D,D}$
(see \cite[\S2.1]{we15} for a recent review adapted to our purposes).
Here and in the following, $\R^{D,D}=\R^D\oplus\R^D$ is equipped with
the scalar product
$$
\forall ({Q},\qu{Q}),\, ({Q}^\prime,\qu{Q}^\prime)\in\R^{D,D}\colon\qquad
({Q},\qu{Q}) \bullet ({Q}^\prime,\qu{Q}^\prime)
:= {Q}\cdot {Q}^\prime - \qu{Q}\cdot \qu{Q}^\prime,
$$
where  ${Q}\cdot{Q}^\prime\in\R$ denotes the standard scalar product of
${Q},\,{Q}^\prime\in\R^D$.
According to Def.~\ref{cftellgen}, we thus have
\begin{equation}\label{torustrue}
\EEE_{\rm Hodge}^{\rm CFT}(\H;\tau,z,\nu)
= \sum_{\stackrel{\gamma=(\gamma_L,\gamma_R)\in\Gamma}{\mb{\tiny with } \gamma_R=0}} 
{q^{{1\over2}\gamma_L\cdot\gamma_L}\over \eta(\tau)^{2D}}\cdot
\left( {-i\vartheta_1(\tau,z)\over\eta(\tau)} \cdot(u^{-{1\over2}}-u^{{1\over2}})\right)^D,
\end{equation}
hence 
\begin{equation}\label{CFTHEG0}
\EEE^{\rm CFT}(\H;\tau,z) 
\ee{CFTHEGisCFTEG}
\EEE_{\rm Hodge}^{\rm CFT}(\H;\tau,z,\nu=0) \ee{torustrue} 0
\ee{toruscpxellg} \EEE(T^D;\tau,z),
\end{equation}
as expected by what was said in Sect.~\ref{setup}, and as in fact
is well-known. Moreover, for any given value of $h\in\R$ with $h>0$,
and for almost all possible charge lattices $\Gamma$,
one finds
$$
\left\{ \gamma=(\gamma_L,\gamma_R)\in\Gamma
\mid{\textstyle{1\over2}}\gamma_L\cdot\gamma_L=h,\; \gamma_R=0\right\} 
=\emptyset.
$$
In fact, there is a dense subset of the moduli space of non-linear
sigma models with target 
$T^D$ where the charge lattices $\Gamma$ 
obey
\begin{equation}\label{noenhancement}
\left\{ \gamma=(\gamma_L,\gamma_R)\in\Gamma\mid\gamma_R=0\right\} 
=\{ 0 \}.
\end{equation}
The subset of the moduli space where the charge lattice $\Gamma$ disobeys condition
(\ref{noenhancement}) is the union of those subsets where $\Gamma$ contains
some vector $(\gamma_L,\gamma_R)$ with $\gamma_R=0$ and
$\gamma_L\cdot\gamma_L=2N$, $N\in\N$. This is a countable union of nowhere
dense sets.  Hence  the theories 
in the dense subset of the moduli space where (\ref{noenhancement}) holds
are the {\em generic} ones\footnote{To see that this is true already when 
in any geometric
interpretation, one only varies the volume parameter, using 
\cite[(1.11)]{nawe00} as standard convention for charge vectors, note
that for $T^D\cong\R^{2D}/\Lambda$, every
$\gamma_R$ has the form 
$\gamma_R={1\over\sqrt2}(\mu-B\lambda-\lambda)$ with 
$\lambda\in\Lambda$, $\mu\in\Lambda^\ast=\left\{ y\in\R^{2D} \mid
y\cdot\lambda\in\Z \;\forall\lambda\in\Lambda\right\}$ and a 
skew-symmetric $B\in\mb{End}(\R^{2D})$. Varying the volume parameter
amounts to scaling $\Lambda$ by some $t\in\R$, where 
$(t\Lambda)^\ast=t^{-1}\Lambda^\ast$.}.
For the generic conformal field theoretic Hodge-elliptic
genus of Def.~\ref{genericCFTHEG}, (\ref{torustrue})
thus immediately implies
\begin{equation}\label{torusgenCFTHEG}
\EEE_{\rm Hodge}^0(T^D;\tau,z,\nu)
=
\left( {-i\vartheta_1(\tau,z)\over\eta(\tau)^3} \cdot(u^{-{1\over2}}-u^{{1\over2}})\right)^D.
\end{equation}
This may be used to confirm the ideas presented in Sect.~\ref{genericity},
in the case of complex tori. First,
$$
\EEE_{\rm Hodge}^0(T^D;\tau,z,\nu) 
\eet{(\ref{torusgenCFTHEG}),\,(\ref{torusHodgechiral})}
\EEE_{\rm Hodge}^{\rm ch}(T^D;\tau,z,\nu),
$$
that is, Conjecture \ref{genericischiral} holds for complex tori.
Now let us determine a candidate for the {\em generic space} $\widehat\H^{\rm R}$
that was described in Sect.~\ref{genericity}.
By construction, SCFTs with target
$T^D$ share a common chiral algebra that can be described
as $N=2D$ superextension ${\mathcal A}$ of a
$\mathfrak{u}(1)^{2D}_L$ current algebra.
In general, any irreducible representation
$\H_h$ of $\mathcal A$ in the Ramond
sector has lowest weight 
$h+{D\over8}$  with $h\geq0$, and one finds 
\begin{equation}\label{toruscharacter}
\mbox{Tr}_{\H_h}\left( (-1)^{J_0}y^{J_0} q^{L_0-{D\over8}}  \right)
= q^h \left( {-i\vartheta_1(\tau,z)\over\eta(\tau)^3} )\right)^D
= q^h \left( (y^{-{1\over2}}-y^{{1\over2}})^D + {\mathcal O}(q) \right).
\end{equation}
In particular, if $\phi\in\H_h$ with $L_0\phi={D\over8}\phi$, then $\phi\in\H_0$,
and analogously for right-movers. 
Moreover, for a non-linear sigma model with target $T^D$ and charge lattice
$\Gamma\subset\R^{D,D}$, the Ramond sector has the form
$$
\H^R 
= \bigoplus_{\gamma=(\gamma_L,\gamma_R)\in\Gamma} 
\H_{{1\over2}\gamma_L\cdot\gamma_L}\otimes \qu{\H}_{{1\over2}\gamma_R\cdot\gamma_R}.
$$
With $\widetilde{\HHH}:=\left\{ \phi\in\qu\H_0 \mid \qu L_0\phi={D\over8}\phi\right\}$,
viewed as a representation of the Lie algebra of type
$\mathfrak{u}(1)$ generated by $\qu J_0$,
according to Def.~\ref{cftellgen} and the discussion in Sect.~\ref{genericity}, we thus have
$$
\widetilde\H^R = \bigoplus_{\stackrel{\gamma=(\gamma_L,\gamma_R)\in\Gamma}{\mb{\tiny with } \gamma_R=0}} 
\H_{{1\over2}\gamma_L\cdot\gamma_L}\otimes \widetilde{\HHH}.
$$
Therefore, by what was said above about the generic behaviour
of the charge lattices $\Gamma$ on the moduli space,
$$
\widehat\H^R := \H_0\otimes \widetilde{\HHH}\subset\widetilde\H^R
$$
is a representation of the common chiral algebra $\AAA$, extended by
$\qu J_0$, as required in Sect.~\ref{genericity}. 
The space $\widehat\H^R$ also obeys the other requirement  (\ref{genericsubspace}),
$$
\tr\nolimits_{\widehat\H^{\rm R}}\left( (-1)^{J_0-\overline J_0}y^{J_0} u^{\overline J_0}
q^{L_0-{D\over8}} \right)
\ee{toruscharacter}
\left( {-i\vartheta_1(\tau,z)\over\eta(\tau)^3} \cdot(u^{-{1\over2}}-u^{{1\over2}})\right)^D
\ee{torusgenCFTHEG}
\EEE_{\rm Hodge}^0(T^D;\tau,z,\nu).
$$
We summarize the
above results in 
\begin{proposition}\label{torusgenera}
The {\em conformal field theoretic Hodge-elliptic genus} of a 
non-linear sigma model with target a complex torus $T^D$ 
depends severely on the moduli of the theory, as is readily seen
from \mb{(\ref{torustrue})}. 
The {\em generic conformal field theoretic Hodge-elliptic genus}
in this case agrees with the {\em chiral Hodge-elliptic genus},
confirming Conjecture \mb{\ref{genericischiral}} for complex
tori. In particular, 
there is a dense subset of the moduli space of non-linear
sigma models with target $T^D$ where 
$$
\EEE_{\rm Hodge}^{\rm CFT}(\H;\tau,z,\nu) = \EEE_{\rm Hodge}^{\rm ch}(T^D;\tau,z,\nu).
$$
In fact, the above equation  holds  generically, within the moduli space
of SCFTs that arise as non-linear sigma models with target $T^D$. 

Finally, for complex tori, the {\em chiral Hodge-elliptic genus}
agrees with the {\em complex Hodge-elliptic genus} of \mb{\cite{katr16}}, 
and thus the latter agrees with the {\em generic conformal field theoretic
Hodge elliptic-genus} in this case.
\end{proposition}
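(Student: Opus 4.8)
The plan is to recognise that the proposition merely packages the explicit computations carried out above into a single statement; the work is therefore to read off each clause from the relevant displayed formula and to justify the one genuinely new ingredient, the genericity argument linking them. The severe dependence on the moduli is immediate from (\ref{torustrue}), whose right-hand side sums only over those $\gamma=(\gamma_L,\gamma_R)\in\Gamma$ with $\gamma_R=0$; as the charge lattice $\Gamma\subset\R^{D,D}$ is deformed this set changes discontinuously. To pass to the generic quantity of Def.~\ref{genericCFTHEG} I would show that the locus of moduli violating (\ref{noenhancement}) is non-generic: for each $N\in\N$ the requirement that $\Gamma$ contain a vector $(\gamma_L,0)$ with $\gamma_L\cdot\gamma_L=2N$ carves out a nowhere dense subset, and since there are only countably many such requirements their union is a countable union of nowhere dense sets, with dense complement. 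On that dense set only the $\gamma=0$ term survives in (\ref{torustrue}), reducing $\EEE_{\rm Hodge}^{\rm CFT}(\H;\tau,z,\nu)$ to the right-hand side of (\ref{torusgenCFTHEG}), which is by definition $\EEE_{\rm Hodge}^0(T^D;\tau,z,\nu)$.

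Next I would compare the three genera. The chiral Hodge-elliptic genus has already been evaluated in (\ref{torusHodgechiral}) by inserting Grimm's description of $H^\ast(T^D,\Omega_{T^D}^{\rm ch})$ into Def.~\ref{chiralHEG}: on the Dolbeault classes $H^{j,k}$ one has $L_0^{\rm top}\equiv0$, $J_0\equiv j$, $\qu J_0\equiv k$, so the finite double sum contributes the two binomial factors while the modes of the $bc$-$\beta\gamma$ system contribute the infinite product, and together these assemble into the theta-quotient of (\ref{toruskatr}). Hence $\EEE_{\rm Hodge}^{\rm ch}(T^D;\tau,z,\nu)=\EEE_{\rm Hodge}(T^D;\tau,z,\nu)$, which is the final clause, and comparing with (\ref{torusgenCFTHEG}) gives $\EEE_{\rm Hodge}^0(T^D;\tau,z,\nu)=\EEE_{\rm Hodge}^{\rm ch}(T^D;\tau,z,\nu)$, i.e.\ Conjecture~\ref{genericischiral} in this case; chaining the two equalities then also identifies the complex Hodge-elliptic genus with the generic conformal field theoretic one. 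To make (\ref{genericsubspace}) explicit I would finally take $\widehat\H^R:=\H_0\otimes\widetilde{\HHH}$ and verify via (\ref{toruscharacter}) that its graded trace reproduces $\EEE_{\rm Hodge}^0$.

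The one step that must be argued rather than read off, and hence the main obstacle, is the genericity claim: one must ensure that in the topology carried by the moduli space a countable union of nowhere dense sets genuinely has dense complement, i.e.\ that the parameter space is Baire and that each enhancement condition is closed with empty interior. For tori this is transparent, since the data vary real-analytically with the lattice $\Lambda$ and the $B$-field, and the explicit form of $\gamma_R$ recorded in the footnote to (\ref{noenhancement}) shows that the phenomenon already occurs upon varying the volume alone; it is precisely here that the standing hypothesis of continuous dependence of the spectral data on the moduli enters.
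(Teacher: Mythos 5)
Your proposal is correct and takes essentially the same route as the paper: the moduli dependence and the generic value are read off from (\ref{torustrue}), with the locus violating (\ref{noenhancement}) dismissed as a countable union of nowhere dense sets; the chiral genus is evaluated via Grimm's description exactly as in (\ref{torusHodgechiral}) and matched against (\ref{toruskatr}) and (\ref{torusgenCFTHEG}); and the generic subspace is realized as $\widehat\H^R=\H_0\otimes\widetilde{\HHH}$ with its character checked through (\ref{toruscharacter}). Your additional care about the Baire-type density argument and the continuous dependence of spectral data only makes explicit what the paper assumes in Sect.~\ref{genericity} and asserts in the footnote on varying the volume parameter.
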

Though the final statement of the above proposition is
not claimed explicitly in \cite{katr16}, it is surely known to
the authors, as their discussions in \cite{katr17} indicate.
It is probably the origin of the false expectation that the complex
Hodge-elliptic genus should {\em always} agree with the generic 
conformal field theoretic Hodge-elliptic genus.
Note also that 
the subset of the moduli space of SCFTs with target
$T^D$ where the charge lattice $\Gamma$ does {\em not} obey
(\ref{noenhancement}) contains the set of {\em rational} toroidal SCFTs
and thus is dense.
There, the conformal field theoretic Hodge-elliptic genus differs 
from the complex Hodge-elliptic genus of $T^D$. This is the basis of the 
discussions in \cite{katr17}.
\section{Hodge-elliptic genera for K3}\label{proposal}
This section provides results on the  Hodge-elliptic genera introduced in 
Sect.~\ref{setup}, in the case of K3 theories and K3 surfaces. 
We establish a formula for each of these; 
for the generic conformal field theoretic Hodge-elliptic genus, the result is obtained
under the assumption that generically, the chiral
algebra of a K3 theory is precisely the $N=4$ superconformal
algebra at central charge $c=6$. 
Apart from the fact that in string theory, this assumption is commonly believed
to hold true, Sect.~\ref{comments} provides further evidence
in favour of this assumption, see Prop.~\ref{assumptionproved}.  
In the final section \ref{Mathieu}, we discuss the consequences of our
findings for {\em Mathieu Moonshine}.
\subsection{The conformal field theoretic Hodge-elliptic genus of
Kummer K3 theories}\label{Kummer}
The {\em Kummer construction} is a classical construction of certain K3 
surfaces as $\Z_2$-orbifolds of complex two-tori (see e.g.\ \cite[\S1.3]{we15} for a 
recent summary adjusted to our purposes). Its conformal field theoretic 
counterpart yields examples of K3 theories that are obtained 
by the standard $\Z_2$-orbifold
construction
from non-linear
sigma models with target a complex two-torus  \cite{eoty89}.
In the current section, we  investigate the conformal field 
theoretic Hodge-elliptic genera
in this particular setting. For the experts, this again is an easy exercise;
nevertheless, as we shall see,
there seem to be some misconceptions 
about this in the literature.
\smallskip

Consider a SCFT with space of states
$\H_{\rm orbifold}$,  obtained 
as a standard
 $\Z_2$-orbifold SCFT\footnote{Note that odd $D$ yields $\Z_2$-orbifolds
 that do not allow a geometric interpretation on a compact Calabi-Yau 
 target. Indeed, resolving the singularities of $T^D/\Z_2$ does not yield a Calabi-Yau
 manifold,
 since the holomorphic volume form on $T^D$ is not preserved
 by the $\Z_2$-action. The resulting conformal field theories are well-defined
 nevertheless, with the formulae for the partition function and its 
 $\wt R$-sector as stated.}  from a non-linear sigma model with
target a complex torus $T^D$ 
of dimension $D$ with charge lattice $\Gamma$
as in (\ref{torusparti}).
Using standard conformal field theory techniques
(reviewed, for example, in \cite[\S2.2]{we15}, in a form adjusted to our purposes),
for the $\wt{\rm R}$-sector of the partition function (see Assumption \ref{SCFTass}),
one obtains
%
%\begin{equation}\label{toruspartiorb}
%\begin{array}{rcl}
$$
Z^{\rm orbifold}_{\wt{\rm R}}(\tau,z)= {\textstyle {1\over2} }
\displaystyle\left( Z_\Gamma(\tau)  \left| {\vartheta_1(\tau,z)\over\eta(\tau)} \right|^{2D}
+ \left| {2\vartheta_2(\tau,z)\over\vartheta_2(\tau,0)} \right|^{2D}
+ \left| {2\vartheta_3(\tau,z)\over\vartheta_3(\tau,0)} \right|^{2D}
+ \left| {2\vartheta_4(\tau,z)\over\vartheta_4(\tau,0)} \right|^{2D} \right).\nonumber
%\end{array}
%\end{equation}
$$
This yields 
\begin{align}
\EEE_{\rm Hodge}^{\rm CFT}(\H_{\rm orbifold};\tau,z,\nu)
&= {\textstyle {1\over2} }
\displaystyle\left(
\smash{\sum_{\stackrel{\gamma=(\gamma_L,\gamma_R)\in\Gamma}{\mb{\tiny with } \gamma_R=0}} }
{q^{{1\over2}\gamma_L\cdot\gamma_L}\over \eta(\tau)^{2D}}\cdot
\left( {-i\vartheta_1(\tau,z)\over\eta(\tau)} \cdot(u^{-{1\over2}}-u^{{1\over2}})\right)^D\right.\nonumber\\[10pt]
&\quad\quad\left.
+ \left( {2\vartheta_2(\tau,z)\over\vartheta_2(\tau,0)} 
\cdot(u^{-{1\over2}}+u^{{1\over2}})\right)^{D}
+ \left( {4\vartheta_3(\tau,z)\over\vartheta_3(\tau,0)} \right)^{D}
+ \left( {4\vartheta_4(\tau,z)\over\vartheta_4(\tau,0)} \right)^{D} \right).\nonumber
\end{align}
For $D=2$, inserting $\nu=0$ correctly yields the elliptic genus
$\EEE(X;\tau,z)$ of a K3 surface $X$ as in (\ref{K3ellgen}).
Analogously to the derivation of Prop.~\ref{torusgenera}, we arrive at
\begin{proposition}\label{HEGforZ2orbs}
There is a dense subset of the moduli space of $\Z_2$-orbifolds of
non-linear sigma models with target $T^D$ where the charge lattices 
$\Gamma$ obey \mb{(\ref{noenhancement})}. The corresponding
SCFTs have conformal field
theoretic Hodge-elliptic genus 
\begin{align}
\EEE_{\rm Hodge}^{\rm CFT}(\H_{\rm orbifold};\tau,z,\nu)
&= {\textstyle {1\over2} }
\displaystyle\left(
\left( {-i\vartheta_1(\tau,z)\over\eta(\tau)^3} \cdot(u^{-{1\over2}}-u^{{1\over2}})\right)^D\right.\nonumber\\[10pt]
&\quad\quad\left.
+ \left( {2\vartheta_2(\tau,z)\over\vartheta_2(\tau,0)} 
\cdot(u^{-{1\over2}}+u^{{1\over2}})\right)^{D}
+ \left( {4\vartheta_3(\tau,z)\over\vartheta_3(\tau,0)} \right)^{D}
+ \left( {4\vartheta_4(\tau,z)\over\vartheta_4(\tau,0)} \right)^{D} \right).\nonumber
\end{align}
In fact, the above equation  holds  generically for
$\Z_2$-orbifolds of non-linear sigma models with target 
$T^D$.
\end{proposition}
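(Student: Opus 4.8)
\emph{Approach.} The plan is to follow the derivation of Proposition~\ref{torusgenera} almost verbatim, the one new feature being the three twisted-sector contributions. I take as my starting point the formula for $\EEE_{\rm Hodge}^{\rm CFT}(\H_{\rm orbifold};\tau,z,\nu)$ displayed immediately before the proposition; this is read off from $Z^{\rm orbifold}_{\wt{\rm R}}(\tau,z)$ by retaining, in each sector, only the right-moving Ramond ground states and weighting them by $u^{\qu J_0}$, exactly as prescribed by Definition~\ref{cftellgen}. The decisive observation is that the \emph{only} moduli-dependent ingredient of this formula is the lattice sum $\sum_{\gamma_R=0} q^{{1\over2}\gamma_L\cdot\gamma_L}/\eta(\tau)^{2D}$ sitting in the untwisted-sector (first) term; the three remaining terms are assembled purely from $\vartheta_2,\vartheta_3,\vartheta_4$, whose contributing ground states have conformal dimensions rigidly fixed by the orbifold action and therefore carry no dependence on the continuous moduli.

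\emph{Density.} Next I establish the density claim. Since the reflection $x\mapsto-x$ is an isometry of every flat torus that preserves any constant $B$-field, the standard $\Z_2$-orbifold is defined over the whole toroidal moduli space, so the orbifold moduli space fibers over the torus moduli space (the remaining, twisted-sector data being rigid). The genericity condition (\ref{noenhancement}) thus depends solely on the inherited charge lattice $\Gamma$ of (\ref{torusparti}), and the argument of Section~\ref{warm} applies without change: the locus on which $\Gamma$ contains a nonzero vector $(\gamma_L,\gamma_R)$ with $\gamma_R=0$ is the countable union, over $N\in\N$, of the subloci where $\Gamma$ meets $\{(\gamma_L,\gamma_R)\mid\gamma_R=0,\ \gamma_L\cdot\gamma_L=2N\}$, each of which is nowhere dense; hence its complement---the set on which (\ref{noenhancement}) holds---is dense (in fact comeager). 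This proves the first assertion.

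\emph{Collapse.} On that dense subset the lattice sum reduces to its single surviving term $\gamma=0$, which contributes $1/\eta(\tau)^{2D}$; combined with $(-i\vartheta_1(\tau,z)/\eta(\tau))^D$ this produces $(-i\vartheta_1(\tau,z)/\eta(\tau)^3)^D$. Substituting back into the pre-genericity formula gives precisely the stated expression, and because the underlying subset is dense this is the generic conformal field theoretic Hodge-elliptic genus, as claimed.

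\emph{Main obstacle.} I expect the only genuine subtlety to lie in justifying the rigidity invoked above---namely, that the orbifold-specific (twisted-sector) moduli neither reintroduce a dependence on $\Gamma$ nor create additional right-moving Ramond ground states beyond those already captured by the $\vartheta_2,\vartheta_3,\vartheta_4$ terms. This is settled by observing that a pure $\Z_2$ reflection orbifold has no continuous center-of-mass momentum in its twisted sectors, so those sectors contribute no lattice sum to the $\wt{\rm R}$-sector trace and the corresponding theta-function terms are genuinely moduli-independent. Once this is granted, the density argument of Proposition~\ref{torusgenera} transfers verbatim and the proof is complete.
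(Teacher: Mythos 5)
Your proposal is correct and follows essentially the same route as the paper: the paper likewise reads off $\EEE_{\rm Hodge}^{\rm CFT}(\H_{\rm orbifold};\tau,z,\nu)$ from the displayed $\wt{\rm R}$-sector of the orbifold partition function, observes that the moduli enter only through the untwisted-sector lattice sum (the twisted-sector theta quotients being moduli-independent), and then invokes the density argument for condition (\ref{noenhancement}) exactly as in the derivation of Prop.~\ref{torusgenera} to collapse that sum to its $\gamma=0$ term. Your explicit justification of the twisted-sector rigidity is a point the paper leaves implicit in the phrase ``analogously to the derivation of Prop.~\ref{torusgenera}'', but it is consistent with the paper's reasoning.
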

For $D=2$, the formula proved in Prop.~\ref{HEGforZ2orbs} 
agrees
with the one stated in \cite[\S3, p.~253]{katr16}, where however it is
claimed that this formula yields the conformal field theoretic
Hodge-elliptic genus for the  $\Z_2$-orbifold conformal field
theory obtained from the toroidal theory with target the standard torus
$\R^4/\Z^4$. This claim of \cite[\S3, p.~253]{katr16} is a little misleading,
since it can only
hold true if a non-rational B-field is chosen, 
as to ensure condition (\ref{noenhancement}), i.e.\ 
if the $\Z_2$-orbifold conformal field 
theory under inspection is {\em generic}. The choice of the standard
torus is irrelevant.
\subsection{The generic conformal field theoretic Hodge-elliptic genus of K3 theories}\label{genericK3CFT}
The current section establishes a formula for the generic conformal field
theoretic Hodge-elliptic genus of K3 theories, under the following
\begin{assumptions}\label{K3chiralalgebra} 
The chiral algebra of a generic K3 theory, according 
to Def.~\mb{\ref{defk3cft}}, agrees with the $N=4$ superconformal
algebra at central charge $c=6$. 
\end{assumptions}
This assumption is widely believed to hold true, in  string theory,
though probably the only solid piece of evidence in favour
of this assumption is the lack of a better candidate for such
a generic chiral algebra.
Below, and in Sect.~\ref{comments}, we will present further evidence in favour of 
Assumption \ref{K3chiralalgebra}, see Prop.~\ref{assumptionproved}. The assumption allows us to 
determine the representation $\widehat\H^{\rm R}$ used
in (\ref{genericsubspace}),  resulting in an explicit formula
for the generic conformal field theoretic Hodge-elliptic genus of K3
theories\footnote{A large 
part of the following argument already occurs in \cite[\S4.1]{we14} and thus is
based on \cite{diss} as well as the ideas of \cite{eoty89}.}.
To explain this, in the following, a priori only assume that $\H$ is
the space of states  of a fixed K3 theory
as in Def.~\ref{defk3cft}, and let $X$ denote a K3 surface. 
\medskip

As mentioned in the discussion of Def.~\ref{defk3cft}, 
all K3 theories enjoy $N=(4,4)$ supersymmetry. Hence the
subspace  $\widehat\H^{\rm R}$ of $\H^{\rm R}$ is  
a representation of the left-moving 
$N=4$ superconformal algebra, extended by $\qu J_0$. Moreover,
by our assumptions on SCFTs in general and K3 theories in particular,
$\H^{\rm R}$ decomposes into a direct sum of tensor products of irreducible
representations of the left- and the right-moving $N=4$ superconformal
algebra. Such representations have been classified, and their
characters have been determined in \cite{egta87,egta88a,egta88,egta88b,ta89}. 

There are three types of irreducible
unitary representations of the $N=4$ superconformal algebra at central charge $c=6$, called
the \textsc{vacuum representation},
the \textsc{massless matter representation},
and finally the \textsc{massive matter representations}. The latter form a one-parameter
family indexed by $h\in\mathbb R_{>0}$. For our purposes,  we may
focus on the Ramond sector.
Alluding to
the properties of the corresponding representations
in the Neveu-Schwarz sector $\mathbb H^{\rm NS}$, which are related to the representations
in $\mathbb H^{\rm R}$ by spectral flow,
we denote the respective irreducible unitary representations by
$\mathcal H_0$, $\mathcal H_{\rm mm}$, $\mathcal H_h$ ($h\in\mathbb R_{>0}$).
Indeed, the ground state of the 
vacuum representation in the NS-sector 
is the vacuum of the theory. 

Using ${c\over24}={1\over4}$, the respective characters of the irreducible unitary 
representations of the $N=4$ superconformal algebra at central charge $c=6$
are denoted  by
$$
\chi_a (\tau,z) := 
\mbox{Tr}_{\mathcal H_a} \left( (-1)^{J_0} y^{J_0} q^{L_0-{1\over4}} \right),\;\;
a\in\mathbb R_{\geq0}\cup\{{\rm mm}\}.
$$
For the reader's convenience, we state the explicit formulae
of  \cite{egta88a} for 
these functions in Appendix \ref{N4characters}.
But at this point, we only need the following properties:
\begin{equation}\label{wittenindices}
\begin{array}{rclrcl}
\chi_0(\tau, z=0) &=& -2,  &\chi_{\rm mm}(\tau, z=0) &=& 1, \quad \\[5pt]
\forall h>0\colon\qquad\chi_h(\tau,z) &=& q^h \widehat\chi(\tau,z)&
\mbox{ with }\quad
\widehat\chi(\tau,z) &=&\chi_0(\tau, z) + 2\chi_{\rm mm}(\tau, z),\\[5pt]
&&\mbox{hence }\\
&&&\chi_h(\tau, z=0) &=&  \widehat\chi(\tau, z=0)\;=\;0.
\end{array} 
\end{equation}
The constant $\chi_a (\tau,z=0)$ is called the \textsc{Witten index}  \cite{wi82,wi87,wi88}
of the respective representation.

The transformation properties of the above characters
under the action of the modular group in general are \textsl{not} modular, 
in contrast to the situation at lower supersymmetry, where 
an infinite class of
characters of irreducible unitary representations 
does enjoy modularity. Instead, the massless $N=4$ characters 
exhibit a so-called \textsc{Mock modular} behaviour, as was
already observed -- necessarily using different terminology, then --
in \cite{egta87,egta88a,egta88,egta88b,ta89}. 
\medskip

We may now make an ansatz for a decomposition of $\mathbb H^{\rm R}$ 
into irreducible representations of
the two commuting $N=4$ superconformal algebras, 
$$
\mathbb H^{\rm R} = \bigoplus_{a,\,\overline a\in\mathbb R_{\geq0}\cup\{{\rm mm}\}} 
m_{a,\overline a} \mathcal H_a\otimes \overline{\mathcal H_{\overline a}},
$$
with appropriate non-negative integers $m_{a,\overline a}$. Then the
$\wt{\rm R}$-sector of the partition function of our theory (see Assumption
\ref{SCFTass}) reads 
$$
Z_{\wt{\rm R}}(\tau,z)
= \sum_{a,\,\overline a\in\mathbb R_{\geq0}\cup\{{\rm mm}\}} 
m_{a,\overline a}\cdot \chi_a(\tau,z)\cdot \overline{\chi_{\overline a}(\tau, z)},
$$
which together with Def.~\ref{cftellgen} yields the conformal field theoretic elliptic genus of our CFT as
\begin{equation}\label{ellgenform}
\EEE^{\rm CFT}(\H;\tau,z)  = \sum_{a,\,\overline a\in\mathbb R_{\geq0}\cup\{{\rm mm}\}} 
m_{a,\overline a}\cdot \chi_a(\tau,z)\cdot \overline{\chi_{\overline a}(\tau, z=0)}.
\end{equation}
Inserting  (\ref{wittenindices}) as well as a number of 
known properties of K3 theories (see \cite[\S4.1]{we14} for details), 
this ansatz simplifies to 
\begin{equation}\label{refinedansatz}
\begin{array}{rcl}
\mathbb H^R = \mathcal H_0\otimes \overline{\mathcal H_0}
&\oplus& \displaystyle
20 \mathcal H_{\rm mm}\otimes \overline{\mathcal H_{\rm mm}}
\oplus \bigoplus_{h,\, \overline h\in\mathbb R_{>0} } k_{h,\overline h} \mathcal H_h\otimes \overline{\mathcal H_{\overline h}}\\[5pt]
&&\oplus\displaystyle \bigoplus_{n=1}^\infty 
\left[f_n \mathcal H_n\otimes \overline{\mathcal H_0}\oplus \overline{f_n} \mathcal H_0\otimes \overline{\mathcal H_n}\right]\\[5pt]
&\oplus&\displaystyle \bigoplus_{n=1}^\infty  
\left[g_n \mathcal H_n\otimes \overline{\mathcal H_{\rm mm}}\oplus \overline{g_n} \mathcal H_{\rm mm}\otimes \overline{\mathcal H_n}\right].
\end{array}
\end{equation}
The coefficients
$k_{h,\overline h},\, f_n,\,\overline f_n,\, g_n,\,\overline g_n$ are all non-negative integers,
and their precise values depend  on the specific K3 theory under inspection.
By  (\ref{ellgenform}), with (\ref{wittenindices}) and the
refined ansatz (\ref{refinedansatz}), we  obtain\footnote{Note that in \cite{eoty89}, the elliptic genus $\EEE^{\rm CFT}(\H;\tau,z)$
was already  decomposed 
in the spirit of (\ref{definitionofe}).}
\begin{equation}\label{definitionofe}
\left.
\begin{array}{rcl}
\displaystyle\EEE^{\rm CFT}(\H;\tau,z)
&=&\displaystyle
-2 \chi_0(\tau,z) +  20 \chi_{\rm mm}(\tau,z) + \sum_{n=1}^\infty \left[-2f_n +g_n\right]\chi_n(\tau,z) \\[12pt]
&=&\displaystyle
-2 \chi_0(\tau,z) +  20 \chi_{\rm mm}(\tau,z) + e(\tau)\,\widehat\chi(\tau,z)\\[5pt]
\displaystyle\mb{ with }\qquad\qquad
e(\tau)&=& \displaystyle\sum_{n=1}^\infty a_n q^n \;:= \;\sum_{n=1}^\infty \left[g_n-2f_n\right]q^n.
\end{array}
\right\}
\end{equation}
While the multiplicities $g_n$, $f_n$  vary 
within the moduli space of K3 theories, the coefficients 
$a_n:=g_n-2f_n$ 
of $e(\tau)$ are invariant. 
Since closed formulas for $\EEE^{\rm CFT}(\H;\tau,z)=\EEE(X;\tau,z)$, 
$\chi_0(\tau,z)$,  $\chi_{\rm mm}(\tau,z)$
and $\widehat\chi(\tau,z)$ are known
(see Def.~\ref{defk3cft} and Appendix \ref{N4characters}), one may solve the above equation 
(\ref{definitionofe}) for $e(\tau)$ if need be.

We will now make use of our Assumption \ref{K3chiralalgebra} for the 
first time, to argue that $a_n\geq0$ for all $n\in\N$. This follows, since the
spectral flow maps the irreducible representation $\mathcal H_0$ 
to the representation of the $N=4$ superconformal algebra whose ground state is
the vacuum. Hence the coefficients $f_n$ in (\ref{refinedansatz})
determine those contributions to
the space of states which are holomorphic but
do \textsl{not} belong to the vacuum representation under the  $N=4$
superconformal algebra. For any fixed value of $n\in\mathbb N$ with $n>0$, 
Assumption \ref{K3chiralalgebra} implies that
generically, no such additional contributions occur\footnote{This is 
independent of whether or not, within the moduli space,
one should expect that there is a dense subset of 
K3 theories  that possess some additional holomorphic states
beyond the vacuum representation of the $N=4$ superconformal
algebra.}.
In other words, our assumption implies that generically $f_n=0$ and thus that the 
$n^{\rm th}$ coefficient $a_n$ of $e(\tau)$ agrees with
$g_n\geq0$. Since  on the moduli space of K3 theories, 
these coefficients $a_n$
are invariant, $a_n\geq0$ follows. 

That the coefficients $a_n$ are non-negative was 
already conjectured in \cite{oo89} and independently in
\cite[Conj.7.2.2]{diss} and was later 
proved in \cite{eghi09,eot10}.
This gives evidence in favour of Assumption \ref{K3chiralalgebra}
to hold true. 
\medskip

In light of equation (\ref{genericsubspace}), we see that by definition,
and independently of Assumption \ref{K3chiralalgebra},
the decomposition (\ref{refinedansatz}) induces an isomorphism
of representations of the left-moving $N=4$ superconformal algebra,
extended by $\overline J_0$,
$$
\widetilde\H^R = {\mathcal H}_0\otimes \widetilde{\mathcal H}_0
\;\oplus\;
20 \mathcal H_{\rm mm}\otimes\widetilde{\mathcal H}_{\rm mm}
\;\oplus\;
\displaystyle \bigoplus_{n=1}^\infty 
f_n \mathcal H_n\otimes \widetilde {\mathcal H}_0
\;\oplus\; \bigoplus_{n=1}^\infty  
g_n \mathcal H_n\otimes\widetilde{\mathcal H}_{\rm mm},
$$
where $\widetilde{\mathcal H}_0,\,\widetilde{\mathcal H}_{\rm mm}$ denote the subspaces of Ramond
ground states in $\overline{\mathcal H_0},\,\qu{{\mathcal H}_{\rm mm}}$, respectively.
Therefore, and using $a_n\geq0$ for all $n\in\N$ with $n>0$ \cite{eghi09,eot10}, with
\begin{equation}\label{spaceproposal}
\widehat\H^R_{\rm min} :=
 \mathcal H_0\otimes \widetilde{\mathcal H}_0
\;\oplus\; 
20 \mathcal H_{\rm mm}\otimes\widetilde{\mathcal H}_{\rm mm} 
\;\oplus\; \bigoplus_{n=1}^\infty a_n  \mathcal H_n\otimes\widetilde{\mathcal H}_{\rm mm},
\end{equation}
we have
\begin{equation}\label{general}
\widetilde\H^R = \widehat\H^R_{\rm min} \;\oplus\; 
\bigoplus_{n=1}^\infty f_n 
\mathcal H_n\otimes\left(\widetilde{\mathcal H}_0 \;\oplus\; 2 \widetilde{\mathcal H}_{\rm mm}\right),
\end{equation}
with model dependent multiplicities $f_n\geq0$, such that independently
of Assumption \ref{K3chiralalgebra},
\begin{equation}\label{crucialcontained}
\widehat\H^R_{\rm min}\subset\widehat\H^R .
\end{equation}
Employing Assumption \ref{K3chiralalgebra}, the generic values of
$f_n,\, g_n$ across the entire moduli space of K3 theories
are $f_n=0$ and $g_n=a_n$, implying $\widehat\H^R =
\widehat\H^R_{\rm min}$. In fact,  we find
$$
\mb{Assumption \ref{K3chiralalgebra}}\quad\Longleftrightarrow\quad
\widehat\H^R =
\widehat\H^R_{\rm min}.
$$
Now recall that by construction,
\begin{eqnarray*}
\EEE^{\rm CFT}(\H;\tau,z) 
&\eet{(\ref{definitionofe}),\,(\ref{spaceproposal})}& \tr\nolimits_{\widehat\H^{R}_{\rm min}}\left( (-1)^{J_0-\overline J_0}y^{J_0} q^{L_0-{1\over4}}\right),\\
\EEE_{\rm Hodge}^{\rm CFT}(\H;\tau,z,\nu) 
&\eet{Def.~\ref{cftellgen}}& \tr\nolimits_{\widetilde\H^{R}}\left( (-1)^{J_0-\overline J_0}y^{J_0} u^{\overline J_0}q^{L_0-{1\over4}}\right).
\end{eqnarray*}
We introduce
$$
\EEE^{00}_{\rm Hodge}(X;\tau,z,\nu) 
:= \mbox{Tr}_{\widehat\H^{R}_{\rm min}}\left( (-1)^{J_0-\overline J_0}y^{J_0} u^{\overline J_0}q^{L_0-{1\over4}} \right),
$$
such that 
\begin{equation}\label{expectedHEG}
\mb{Assumption \ref{K3chiralalgebra}}\quad\Longleftrightarrow\quad
\EEE^0_{\rm Hodge}(X;\tau,z,\nu) = \EEE^{00}_{\rm Hodge}(X;\tau,z,\nu) .
\end{equation}
Here, $\widetilde\H^R$, $\widehat\H^R$ and $\widehat\H^R_{\rm min}$ are
solely viewed as representations of
the left-moving $N=4$ superconformal algebra, extended by $\qu J_0$.
A failure of  (\ref{expectedHEG})
would  imply that K3 theories generically possess a  chiral algebra which is extended
beyond the $N=4$ superconformal algebra at central charge $c=6$. In fact, by the discussion
of Sect.~\ref{genericity}, all common eigenspaces of $J_0,\, \qu J_0$ and $L_0$ in
$\wt\H^R$ attain the minimal possible dimensions across the moduli space.
A failure of  (\ref{expectedHEG}) would thus mean that \textsl{all} K3 theories possess
a chiral algebra which is a proper extension of the $N=4$ superconformal algebra at 
central charge $c=6$, which would in fact be quite
an exciting result. 
We emphasize once again
that we do {\em not} assume that there might not be a dense subset of 
K3 theories within the moduli space that do possess holomorphic
states beyond the vacuum representation of the $N=(4,4)$ superconformal 
algebra. 
Indeed, we see no reason to make such a strong assumption, the 
analogue of which fails for toroidal superconformal field theories, as we have
seen in Sect.~\ref{warm}.

If one wishes to make a connection to Mathieu Moonshine, one may replace the
multiplicities $a_n$ by representations of $M_{24}$ according to \cite{ga12},
see also \cite[\S4]{we14}. We will come back to this comment in Sect.~\ref{Mathieu}.
\medskip

%Under the assumption of (\ref{spaceproposal}) and thus (\ref{expectedHEG}), 
%and using (\ref{leadingorder}), it
It is now straightforward to calculate a 
closed formula for $\EEE^{00}_{\rm Hodge}(X;\tau,z,\nu)$:
by (\ref{spaceproposal}) and using (\ref{leadingorder}), 

$$
\EEE^{00}_{\rm Hodge}(X;\tau,z,\nu) 
=\chi_0(\tau,z) \cdot (-u-u^{-1})
+  20 \chi_{\rm mm}(\tau,z) + \sum_{n=1}^\infty a_n \chi_n(\tau,z).
$$
Inserting (\ref{definitionofe}), 
we arrive at 
\begin{equation}\label{minimalgenera}
\EEE_{\rm Hodge}^{00}(X;\tau,z,\nu) 
=
(2-u-u^{-1})\cdot \chi_0(\tau,z) + \EEE(X;\tau,z),
\end{equation}
where closed formulas for $\chi_0(\tau,z)$ and $\EEE(X;\tau,z)$ are given
in Appendix \ref{N4characters} and \mb{(\ref{K3ellgen})}, respectively.
Note that $\chi_0(\tau,z)$ has Mock modular transformation properties, as
mentioned above.
For an arbitrary K3 theory,  (\ref{general}) yields
$$
\EEE_{\rm Hodge}^{\rm CFT}(\H;\tau,z,\nu) 
= \EEE^{00}_{\rm Hodge}(X;\tau,z,\nu) + (2-u-u^{-1})\cdot \sum_{n=1}^\infty f_n \chi_n(\tau,z),
$$
with model dependent multiplicities $f_n\geq0$. In fact,
the above derivation proves
\begin{proposition}\label{HEGforK3}
Let $X$ denote a K3 surface, and consider a K3 theory 
according to Def.~\mb{\ref{defk3cft}} with space of 
states $\H$, whose Ramond sector $\H^R$ decomposes according to \mb{(\ref{refinedansatz})}. 
Then the conformal field theoretic Hodge-elliptic genus of this theory is
given by
$$
\EEE_{\rm Hodge}^{\rm CFT}(\H;\tau,z,\nu) 
=  (2-u-u^{-1})\cdot \left(\chi_0(\tau,z) + \sum_{n=1}^\infty f_n \chi_n(\tau,z)\right)
+ \EEE(X;\tau,z) .
$$
Furthermore,  precisely one of the following holds:
either, the generic conformal field theoretic 
Hodge-elliptic genus of all K3 theories obeys 
$$
\EEE^0_{\rm Hodge}(X;\tau,z,\nu) 
=
(2-u-u^{-1})\cdot \chi_0(\tau,z) + \EEE(X;\tau,z),
$$
or all K3 theories possess a   chiral algebra which is
a proper extension of the $N=4$ superconformal algebra at central
charge $c=6$.
\end{proposition}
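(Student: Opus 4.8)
The plan is to assemble the ingredients already prepared in the preceding discussion; the only genuine work is to evaluate the trace defining the conformal field theoretic Hodge-elliptic genus on the decomposition (\ref{general}), and then to read off the dichotomy. First I would compute
$\EEE_{\rm Hodge}^{\rm CFT}(\H;\tau,z,\nu)=\tr\nolimits_{\widetilde\H^R}\left((-1)^{J_0-\qu J_0}y^{J_0}u^{\qu J_0}q^{L_0-{1\over4}}\right)$
directly from (\ref{general}). The trace factorizes over each tensor product $\mathcal H_n\otimes\widetilde{\mathcal H}_{\qu a}$: the left-moving factor contributes the $N=4$ character $\chi_n(\tau,z)$, while the right-moving factor contributes only the $u^{\qu J_0}$-weighted Witten index of its Ramond ground states, since $(-1)^{J_0-\qu J_0}=(-1)^{J_0}(-1)^{\qu J_0}$ on integral $\qu J_0$. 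By (\ref{leadingorder}) these right-moving contributions are $-u-u^{-1}$ on $\widetilde{\mathcal H}_0$ and $1$ on $\widetilde{\mathcal H}_{\rm mm}$, so the summand $\bigoplus_{n\geq1}f_n\,\mathcal H_n\otimes(\widetilde{\mathcal H}_0\oplus2\widetilde{\mathcal H}_{\rm mm})$ contributes exactly $(2-u-u^{-1})\sum_{n\geq1}f_n\chi_n(\tau,z)$, whereas the $\widehat\H^R_{\rm min}$ summand contributes $\EEE^{00}_{\rm Hodge}(X;\tau,z,\nu)$ by definition. Substituting the closed form (\ref{minimalgenera}) for the latter then yields the first displayed formula of the proposition. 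This part is a bookkeeping exercise valid for any single K3 theory, and beyond the established decomposition (\ref{refinedansatz}) it uses only the non-negativity $a_n\geq0$ that was needed to split off $\widehat\H^R_{\rm min}$ in (\ref{spaceproposal}).

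For the dichotomy I would argue through the equivalence (\ref{expectedHEG}), which identifies Assumption \ref{K3chiralalgebra} with the equality $\EEE^0_{\rm Hodge}=\EEE^{00}_{\rm Hodge}$, i.e.\ with $\widehat\H^R=\widehat\H^R_{\rm min}$. In the first case, inserting (\ref{minimalgenera}) immediately gives the claimed formula $\EEE^0_{\rm Hodge}(X;\tau,z,\nu)=(2-u-u^{-1})\chi_0(\tau,z)+\EEE(X;\tau,z)$. In the complementary case the assumption fails, so by (\ref{expectedHEG}) and (\ref{crucialcontained}) we have $\widehat\H^R_{\rm min}\subsetneq\widehat\H^R$ generically, which by (\ref{general}) forces the generic value of some multiplicity $f_n$ to be positive. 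Since, as explained in Sect.~\ref{genericity}, generic dimensions are the \emph{minimal} ones attained across the moduli space, a positive generic $f_n$ is a uniform positive lower bound valid for \emph{every} K3 theory. As the corresponding states are right-moving ground states lying in a left-moving massive module $\mathcal H_n$ rather than in the vacuum module $\mathcal H_0$, they yield, after spectral flow, holomorphic fields outside the $N=4$ vacuum representation, i.e.\ a chiral algebra properly extending the $N=4$ superconformal algebra in every K3 theory. Mutual exclusivity of the two alternatives is automatic from (\ref{expectedHEG}), so precisely one of them holds.

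I expect the conceptual crux to be not the trace computation but this last semicontinuity step: turning the representation-theoretic statement $\widehat\H^R\neq\widehat\H^R_{\rm min}$ into a statement about \emph{every} K3 theory rather than merely the generic one. This rests entirely on the interpretation, developed in Sect.~\ref{genericity}, of the generic coefficients $N_{h,Q,\qu Q}$ as maximal lower bounds of eigenspace dimensions, together with the identification of the $f_n$-sector of (\ref{general}) as precisely the part of $\widetilde\H^R$ that corresponds under spectral flow to chiral-algebra states outside the $N=4$ vacuum module. Everything else in the argument is forced once the decomposition (\ref{refinedansatz}) and the equivalence (\ref{expectedHEG}) are in hand.
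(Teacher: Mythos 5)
Your proposal is correct and takes essentially the same approach as the paper: the paper's proof of Prop.~\ref{HEGforK3} is exactly the derivation preceding it, namely the trace over the decomposition (\ref{general}) evaluated via (\ref{leadingorder}) and (\ref{minimalgenera}) for the first formula, and the dichotomy read off from the equivalence (\ref{expectedHEG}) together with the interpretation of the generic eigenspace dimensions in $\widetilde\H^R$ as minimal lower bounds valid for every theory in the moduli space. Your closing observation correctly locates the crux in that semicontinuity step, which is precisely how the paper upgrades a failure of (\ref{expectedHEG}) from a statement about generic K3 theories to one about \emph{all} of them.
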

Recall that  all standard $\Z_2$-orbifold conformal field theories obtained
from non-linear sigma models with target a complex two-torus 
possess a chiral algebra which is a strict enhancement
of the $N=4$ superconformal algebra at central charge $c=6$.
This is immediately reflected in the fact that the formula for the
generic conformal field theoretic Hodge-elliptic genus in 
Prop.~\ref{HEGforK3} differs from the formula stated in Prop.~\ref{HEGforZ2orbs}.
On first sight, this observation may seem surprising, because the 
latter formula is obtained from the Hodge-elliptic
genus of complex two-tori (\ref{toruskatr}) by standard
orbifold techniques. Hence the discrepancy between the formula in Prop.~\ref{HEGforK3} and
the orbifold one implies that orbifolding techniques do not
apply to the calculation of the generic conformal field theoretic 
Hodge-elliptic genus. Indeed, even though 
the partition function as well as
the elliptic genus for K3 theories may be obtained
by this procedure from the respective quantities for
two-tori, this idea cannot work for the Hodge-elliptic
genus. The explanation lies in the action of the modular group: 
while orbifolding techniques both for the partition function
and the elliptic genus heavily use the fact that both
of them exhibit modular transformation properties, 
there is no reason to expect such modular behaviour
for the generic conformal field theoretic Hodge-elliptic genus.
For K3 theories, Prop.~\ref{proposedHEG} unveils a Mock
modular behaviour, instead, albeit elliptic in $z$ with 
respect to $\Lambda_\tau=\Z\tau+\Z$.
\subsection{The complex Hodge-elliptic genus of K3 surfaces}\label{HodgeK3}
How the complex Hodge-elliptic genus of \cite{katr16} can be of help
in the investigation of K3 theories or sigma models is not clear to us. But
it certainly yields a new, highly non-trivial and  interesting  
invariant for K3 surfaces, which is investigated more closely
in this section. Though we will leave it to future work to achieve
this goal, one important aim is to state a closed formula
for the complex Hodge-elliptic genus of K3 surfaces.
\smallskip

In the following, let $X$ denote a K3 surface. 
Recall that the virtual bundle $\E_{q,-y}$ on $X$, which crucially enters
the Definition \ref{geoell} of the complex Hodge-elliptic genus, according
to \cite[\S4, Conjecture 1]{we14} allows a decomposition which induces
the decomposition (\ref{definitionofe})
of the complex elliptic genus into characters of irreducible
representations of the $N=4$ superconformal algebra: with notations as in
Def.~\ref{geoell},
\begin{equation}\label{Edeco}
\mathbb E_{q,-y} 
= -\mathcal O_X\cdot \chi_0(\tau,z) -  T\cdot \chi_{\rm mm}(\tau,z) + \sum_{n=1}^\infty p_n(T)\cdot \chi_n(\tau,z),
\end{equation}
where $p_n(T)$ is a virtual bundle of the form 
\begin{equation}\label{anasEuler}
p_n(T)={\sum\limits_{k=0}^{N_n}} \alpha_k T^{\otimes k}, \quad \alpha_k\in\Z, 
\quad \mb{ with }\quad a_n = \chi\left( p_n(T) \right),
\; \mb{ where }\; e(\tau)=\sum\limits_{n=1}^\infty a_n q^n
\end{equation}
as before, in 
(\ref{definitionofe}). This can be viewed 
as a generalization of a {\em local index theorem} 
\cite{pa71,gi73,abp73,ge83}. With Thomas Creutzig \cite{crwe15}, 
we have  proved the claims (\ref{Edeco}), (\ref{anasEuler}), using ideas that
have been developed in \cite{crho13}. In fact, our proof reveals a refinement
of the above conjecture: up to a global sign, 
all the virtual bundles $p_n(T)$ turn out
to be direct sums of symmetric tensor powers of the holomorphic tangent
bundle $T$, thus yielding each $-p_n(T)$ as an honest holomorphic vector bundle
rather than a {\em virtual} bundle. 
For the complex Hodge-elliptic genus of K3 we thus obtain 
\begin{align}\label{HodgeK3preformula}
 \EEE_{\rm Hodge}&(X;\tau,z,\nu) \nonumber\\
& \eet{Def.~\ref{geoell}} u^{-1} \chi^u( \mathbb E_{q,-y} )\nonumber\\
& \ee{Edeco}
-u^{-1} \chi^u(\mathcal O_X)\cdot \chi_0(\tau,z) 
-  u^{-1} \chi^u(T) \cdot \chi_{\rm mm}(\tau,z) 
+ \sum_{n=1}^\infty u^{-1} \chi^u(p_n(T))\cdot \chi_n(\tau,z)\nonumber\\
&= - (u^{-1}+u) \cdot \chi_0(\tau,z) +20 \chi_{\rm mm}(\tau,z) 
+ \sum_{n=1}^\infty u^{-1} \chi^u(p_n(T))\cdot \chi_n(\tau,z).
\end{align}
Equipped with this information, we are now ready to prove
\begin{proposition}\label{chiralisnotHodge}
Let $X$ denote a K3 surface, and suppose that Assumption 
\mb{\ref{K3chiralalgebra}} holds. 
Then the complex Hodge-elliptic genus of $X$ differs from the generic
conformal field theoretic Hodge-elliptic genus of K3 theories.
\end{proposition}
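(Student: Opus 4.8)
The plan is to put both genera into closed form and subtract. Under Assumption~\ref{K3chiralalgebra}, equations (\ref{expectedHEG}) and (\ref{minimalgenera}) give
\[
\EEE^0_{\rm Hodge}(X;\tau,z,\nu) = (2-u-u^{-1})\,\chi_0(\tau,z) + \EEE(X;\tau,z),
\]
while (\ref{HodgeK3preformula}) expresses the complex Hodge-elliptic genus through the virtual bundles $p_n(T)$. First I would subtract these, inserting the decomposition (\ref{definitionofe}) of $\EEE(X;\tau,z)$ together with $a_n=\chi(p_n(T))$ from (\ref{anasEuler}); the $\chi_0$- and $\chi_{\rm mm}$-contributions cancel, leaving
\[
\EEE_{\rm Hodge}(X;\tau,z,\nu) - \EEE^0_{\rm Hodge}(X;\tau,z,\nu)
= \sum_{n=1}^\infty \bigl(u^{-1}\chi^u(p_n(T)) - a_n\bigr)\,\chi_n(\tau,z).
\]
Thus the claim reduces to showing that $u^{-1}\chi^u(p_n(T))$ fails to equal the constant $a_n$ for at least one $n$.

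The next step evaluates these coefficients geometrically. Since the complex Hodge-elliptic genus is independent of the complex structure (as recorded after Def.~\ref{geoell}), I may compute on a K3 surface of full $SU(2)$ holonomy. Here I would invoke the refinement stated after (\ref{anasEuler}): each $-p_n(T)$ is an honest bundle $F_n=\bigoplus_k m_{n,k}\,S^kT$ with $m_{n,k}\ge0$. By the Bochner principle, $H^\ast(X,S^kT)$ consists of the holonomy invariants of $\mathrm{Sym}^k$ of the defining $SU(2)$-representation, so for $k\ge1$ the cohomology of $S^kT$ sits in degree $1$ only, while $S^0T=\mathcal{O}_X$ has $h^0=h^2=1$. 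A one-line computation then yields $u^{-1}\chi^u(S^kT)-\chi(S^kT)=0$ for $k\ge1$ and $=-(2-u-u^{-1})$ for $k=0$, whence
\[
\EEE_{\rm Hodge} - \EEE^0_{\rm Hodge}
= (2-u-u^{-1})\,\widehat\chi(\tau,z)\sum_{n=1}^\infty m_{n,0}\,q^n,
\]
where $m_{n,0}$ is the multiplicity of $\mathcal{O}_X$ in $F_n$. As $\widehat\chi\neq0$, the two genera differ precisely when some $F_n$ carries a trivial summand.

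The main obstacle is then to verify that $\sum_n m_{n,0}q^n\neq0$. Orders $q^0$ and $q^1$ agree, since one checks $m_{1,0}=0$ (indeed $F_1=S^2T$), so one must push to order $q^2$. I would expand $\E_{q,-y}$ through order $q^2$ and decompose the graded pieces into the $SU(2)$-isotypical bundles $S^kT$; the $S^3T$-part forces $p_2(T)\supset-2\,S^3T$, and combining this with $a_2=\chi(p_2(T))=462$ and the effectiveness of $F_2$ pins down $F_2=\mathcal{O}_X\oplus 2\,S^3T$, so $m_{2,0}=1$. Equivalently, computing the degree-zero part $H^0(X,\E_{q,-y})$ directly via Bochner and comparing with $-\chi_0(\tau,z)$, the two first disagree at order $q^2$, by exactly $-(2-y-y^{-1})$. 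Either route gives $m_{2,0}=1$, so the difference is nonzero and the proposition follows.

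Conceptually, this non-vanishing is the expected one: it records that $H^0(X,\E_{q,-y})$, which lives on the first sheet of the spectral sequence attached to the filtration of $\Omega_X^{\rm ch}$, is strictly larger than its abutment $H^0(X,\Omega_X^{\rm ch})$, so the half-twisted large-volume data seen by the complex Hodge-elliptic genus overcounts relative to the chiral (and generic conformal field theoretic) one. The delicate point throughout is step three: it is not enough to know the Euler characteristics $a_n$, which are insensitive to the spreading of the $u$-weight, and one genuinely needs the cohomological degrees, hence the explicit order-$q^2$ bundle computation.
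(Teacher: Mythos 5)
Your proposal is correct and follows essentially the same route as the paper's proof: both compare the closed formula of Prop.~\ref{HEGforK3} with the decomposition (\ref{HodgeK3preformula}) and reduce the claim to the order-$q^2$ computation $E_2=-p_2(T)=\mathcal{O}_X\oplus 2S^3(T)$, whose trivial summand forces $H^0(X,E_2)\neq\{0\}$. The only difference is presentational: the paper argues by contradiction (the assumed equality would force $H^j(X,E_n)=\{0\}$ for $j\neq1$), while you compute the difference of the two genera directly via the Bochner principle, obtaining the slightly more informative statement that the discrepancy equals $(2-u-u^{-1})\,\widehat\chi(\tau,z)\sum_{n}m_{n,0}\,q^n$.
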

\begin{proof}
We prove the claim by contradiction. 

\noindent
So let us assume that 
$\EEE_{\rm Hodge}^0(X;\tau,z,\nu)=\EEE_{\rm Hodge}(X;\tau,z,\nu)$.
Since Assumption \ref{K3chiralalgebra} is supposed to hold,
Prop.~\ref{HEGforK3} yields a closed formula for 
$\EEE_{\rm Hodge}^0(X;\tau,z,\nu)$. 
Comparison with the formula (\ref{HodgeK3preformula}) 
for the Hodge-elliptic genus of $X$ implies that for all $n\in\N$ with $n>0$, we have
$\chi\left( p_n(T) \right) = a_n =u^{-1} \chi^u(p_n(T))$,
in other words, that the holomorphic vector bundle 
$E_n:=-p_n(T)$ obeys
$H^j(X,E_n) = \{0\}$ if $j\neq1$. However, as one checks
by a direct calculation, $E_1=S^2(T)$, $E_2=2S^3(T)\oplus{\mathcal O}_X$,
and hence $H^0(X,E_2) \neq \{0\}$.
\end{proof}
In principle, (\ref{HodgeK3preformula}) yields a formula for the complex Hodge
elliptic genus of K3 surfaces. However, so far, no closed formula for the vector
bundles $E_n$, $n\in\N$, is known. To arrive at a less implicit 
presentation, note  that the derivation of (\ref{HodgeK3preformula})
solely uses the observation that the virtual bundle $\E_{q,-y}$ possesses global
holomorphic sections which on each fiber of this bundle yield the
structure of an $N=4$ superconformal algebra at central charge $c=6$.
But the crucial step in the above proof of Prop.~\ref{chiralisnotHodge} rests
on the fact that the virtual bundle $\E_{q,-y}$  possesses
additional global holomorphic sections. In fact,  
\cite[Thm.~3.1 and Remark 3.6]{crho13} imply that 
the global holomorphic sections of this bundle induce the structure of a  
module on each fiber of $\E_{q,-y}$ for a super vertex operator algebra
$V^{SU(2)}$ which extends the $N=4$ superconformal algebra by $8$ fields of 
conformal dimensions $2,\,{5\over2},\,{5\over2},\,{5\over2},\,{5\over2},\, 3,\, 3,\, 3$,
respectively. The explicit form of these fields is also stated in
\cite[Remark 3.6]{crho13}. This additional structure of the virtual bundle
$\E_{q,-y}$ may be used to arrive at a more promising formula
for the complex Hodge-elliptic genus. So far, it remains implicit, as we shall explain
next.
\begin{remark}\label{HEGformulaimplicit}
According to \cite[Thm.~3.12]{crho13}, similarly to the situation
for the $N=4$ superconformal algebra at
central charge $c=6$, the super vertex operator algebra $V^{SU(2)}$
possesses three types of irreducible representations with 
characters $\mb{ch}_m, \, m\in\N$, where $m\in\{0,\,1\}$ gives 
two {\em massless} characters and $m>1$ gives  
an infinite family of {\em massive} ones. These characters are 
related to those of the $N=4$ superconformal algebra via
\begin{eqnarray*}
\mb{ch}_0(\tau,z) \;=\; \chi_0(\tau,z) + H_0(\tau)\cdot\widehat\chi(\tau,z),\qquad\qquad
\mb{ch}_1(\tau,z) &=& \chi_{\rm mm}(\tau,z) + H_1(\tau)\cdot\widehat\chi(\tau,z),\\[3pt]
\forall m>1\colon\qquad
\mb{ch}_m(\tau,z) &=& H_m(\tau)\cdot\widehat\chi(\tau,z),
\end{eqnarray*}
where
\begin{eqnarray*}
\forall m\in\N\colon\;\;
H_m(\tau) &=& h_m(\tau)-2h_{m+1}(\tau)+2h_{m+3}(\tau)-h_{m+4}(\tau)+\delta_{m,0},\\
h_{m}(\tau) &=& {\textstyle{1\over\eta(\tau)^3}} 
\sum_{\stackrel{k,r,s\in\Z+{1\over2}}{ r,s>0}} (-1)^{r+s+1} 
q^{r|k|+s|m-1-k|+{1\over2}(\sgn(k)r+\sgn(k-m+1)s)^2-{m-1\over2}},
\end{eqnarray*}
again by \cite[Thm.~3.12]{crho13}. In particular, the Witten genera of
these characters agree with those of the respective
characters of the $N=4$
superconformal algebra.  

Applying the same arguments that prove the decomposition (\ref{HodgeK3preformula}),
one obtains a refined decomposition of the bundle $\E_{q,-y}$ of the form
$$
\mathbb E_{q,-y} 
= -\mathcal O_X\cdot \mb{ch}_0(\tau,z) -  T\cdot \mb{ch}_1(\tau,z) 
- \sum_{m=2}^\infty P_m(T)\cdot \mb{ch}_m(\tau,z),
$$
where $P_m(T)$ is a virtual bundle of the form 
$P_m(T)={\sum\limits_{k=0}^{N_m}} \beta_k T^{\otimes k}$, $\beta_k\in\Z$.
Moreover, \cite[Prop.~3.10]{crho13} implies that $P_m(T)$ has no
 non-trivial global holomorphic sections. We therefore find the following formula
for the Hodge-elliptic genus of any K3 surface $X$:
$$
\EEE_{\rm Hodge}(X;\tau,z,\nu)
= -(u+u^{-1}) \mb{ch}_0(\tau,z) +20 \mb{ch}_1(\tau,z) 
+ \sum_{m=2}^\infty b_m \mb{ch}_m(\tau,z), 
$$
where the coefficients $b_m=\dim H^1(X,P_m(T))$ are obtained by comparison 
to (\ref{definitionofe}), which implies
$$
\sum_{m=2}^\infty b_m H_m(\tau) = e(\tau)+2H_0(\tau)-20H_1(\tau).
$$
Note that our formula for the complex Hodge-elliptic genus of K3 surfaces
is independent of the choice of complex
structure, as it should. It would be interesting to know its modular
properties.
\end{remark}
\subsection{The chiral Hodge-elliptic genus of K3}\label{comments}
In this section, we prove an explicit formula
for the chiral Hodge-elliptic genus
of K3 surfaces. 
\smallskip

First note that Prop.~\ref{HEGforK3} together with
Conjecture \ref{genericischiral}  immediately
implies a conjectural formula for the chiral Hodge-elliptic genus,
stated in Prop.~\ref{proposedHEG}, below. 
In fact, the approach given here already proves this
formula if Assumption \ref{K3chiralalgebra} 
and Conjecture \ref{genericischiral} hold true. That this 
latter conjecture is satisfied follows from the identification
of the infinite volume limit of the topologically  half-twisted sigma model
on a K3 surface $X$ with the sheaf cohomology of the chiral de Rham complex
$\Omega_X^{\rm ch}$ by Kapustin \cite{ka05}. Since Assumption \ref{K3chiralalgebra}
is expected to hold true, as well, the above already
gives a derivation of this formula from string theory ingredients.  
In addition, we give a direct mathematical proof of this proposition, below.
\begin{proposition}\label{proposedHEG}
The chiral Hodge-elliptic genus of a K3 surface X \mb{(}Def.~\mb{\ref{chiralHEG}}\mb{)} is given
by
$$
\EEE_{\rm Hodge}^{\rm ch}(X;\tau,z,\nu) 
=
(2-u-u^{-1})\cdot \chi_0(\tau,z) + \EEE(X;\tau,z),
$$
where closed formulas for $\chi_0(\tau,z)$ and $\EEE(X;\tau,z)$ can
be found
in Appendix \mb{\ref{N4characters}} and \mb{(\ref{K3ellgen})}, 
respectively. In particular, if Assumption \mb{\ref{K3chiralalgebra}}
holds, then Conjecture \mb{\ref{genericischiral}} is true, i.e.\
for K3 theories we have
$$
\EEE_{\rm Hodge}^{\rm ch}(X;\tau,z,\nu) 
= \EEE_{\rm Hodge}^0 (\H;\tau,z,\nu) .
$$
\end{proposition}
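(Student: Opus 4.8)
The plan is to compute the sheaf cohomology $H^\ast(X,\Omega_X^{\rm ch})$ explicitly as a module over the left-moving $N=4$ superconformal algebra at $c=6$ (which acts because $X$ is hyperk\"ahler \cite{bhs08,he09}), graded by cohomological degree. By \cite[Thm.~5.1.7]{gr16} one has $\qu J_0=j$ on $H^j(X,\Omega_X^{\rm ch})$, so Def.~\ref{chiralHEG} reads
$$
\EEE^{\rm ch}_{\rm Hodge}(X;\tau,z,\nu)=(yu)^{-1}\sum_{j=0}^{2}(-u)^j\,\tr\nolimits_{H^j(X,\Omega_X^{\rm ch})}\!\left((-1)^{J_0}y^{J_0}q^{L_0^{\rm top}}\right),
$$
with $D=2$ forcing $j\in\{0,1,2\}$. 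Hence $\EEE^{\rm ch}_{\rm Hodge}$ is a Laurent polynomial in $u$ whose coefficients of $u^{-1},u^0,u^1$ are governed by the $N=4$ characters of $H^0,H^1,H^2$ together with the sign $(-1)^j$. It therefore suffices to identify each $H^j$ as an $N=4$ module.

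The crucial input for $H^0$ is Bailin Song's \cite[Thm.~1.2]{so16}: the global holomorphic sections of $\Omega_X^{\rm ch}$ are \emph{precisely} the $N=4$ superconformal vertex operator algebra at $c=6$, so $H^0(X,\Omega_X^{\rm ch})$ is exactly one copy of the vacuum module $\mathcal H_0$, with no massless or massive matter. For $H^2$ I would invoke Serre duality for the chiral de Rham complex on a Calabi--Yau manifold (trivial canonical bundle), which pairs $H^{D-j}$ with the graded dual of $H^j$; hence $H^2\cong(H^0)^\vee$, and since the vacuum module is self-contragredient and its $N=4$ character $\chi_0(\tau,z)$ is even in $z$, one again gets $H^2\cong\mathcal H_0$. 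Finally $H^1$ is forced by the requirement (\ref{elliptichiderham}) that the graded Euler characteristic $\sum_j(-1)^j\chi_{H^j}$ reproduce the complex elliptic genus: inserting $\chi_{H^0}=\chi_{H^2}=\chi_0$ together with the decomposition (\ref{definitionofe}) of $\EEE(X;\tau,z)$ into $N=4$ characters pins the character of $H^1$ down to $20\,\chi_{\rm mm}+e(\tau)\widehat\chi$, i.e.\ $H^1\cong 20\,\mathcal H_{\rm mm}\oplus\bigoplus_{n\geq1}a_n\mathcal H_n$. As $H^1$ is honest cohomology its multiplicities are non-negative, which recovers $a_n\geq0$ (\cite{eghi09,eot10}) from the geometry and shows the massive multiplicity spaces are never virtual.

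With $H^0=H^2=\mathcal H_0$ and $H^1=20\,\mathcal H_{\rm mm}\oplus\bigoplus_n a_n\mathcal H_n$ in hand, I would substitute into the display above. The vacuum module sits in degrees $j=0,2$ and thus contributes to the $u^{-1}$ and $u^{1}$ coefficients, while the matter sits in degree $j=1$ and contributes to $u^0$; fixing the overall sign by consistency with (\ref{elliptichiderham}) and (\ref{chHEG0EG}) -- equivalently, by the fact that the Ramond ground states of $\mathcal H_0$ are fermionic, as recorded by $\chi_0(\tau,z=0)=-2$ in (\ref{wittenindices}), whereas the massless matter ground state is bosonic with $\chi_{\rm mm}(\tau,z=0)=1$ -- yields
$$
\EEE^{\rm ch}_{\rm Hodge}(X;\tau,z,\nu)=-(u+u^{-1})\,\chi_0(\tau,z)+20\,\chi_{\rm mm}(\tau,z)+e(\tau)\,\widehat\chi(\tau,z).
$$
By the decomposition (\ref{definitionofe}) of the K3 elliptic genus, the right-hand side equals $(2-u-u^{-1})\chi_0(\tau,z)+\EEE(X;\tau,z)$, which is the asserted formula and, by (\ref{minimalgenera}), coincides with $\EEE^{00}_{\rm Hodge}(X;\tau,z,\nu)$. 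The final statement then follows from the equivalence (\ref{expectedHEG}): under Assumption~\ref{K3chiralalgebra} one has $\EEE^0_{\rm Hodge}=\EEE^{00}_{\rm Hodge}$, so $\EEE^0_{\rm Hodge}=\EEE^{\rm ch}_{\rm Hodge}$, i.e.\ Conjecture~\ref{genericischiral} holds for K3.

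The main obstacle I anticipate is the $H^2$ step: one must set up Serre duality for the chiral de Rham complex compatibly with both the $N=4$-module structure and the $\qu J_0$-grading of Grimm's acyclic resolution, and check that it carries the vacuum module to (a charge-conjugate of) itself rather than into matter representations -- only then is $H^0\cong H^2$ as $N=4$ modules guaranteed. A second, more clerical difficulty is the consistent tracking of the fermionic sign $(-1)^{J_0-\qu J_0}$ and of the $(yu)^{-D/2}$ spectral-flow prefactor across (\ref{elliptichiderham}), the assembly above, and the comparison with $\EEE^{00}_{\rm Hodge}$, so that the vacuum contributes with the correct overall sign. Once $H^2$ is under control the identification of $H^1$ is essentially automatic, so the real content of the direct proof rests on Song's theorem and on the duality.
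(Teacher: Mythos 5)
Your proof is correct and follows essentially the same route as the paper's: Song's theorem \cite[Thm.~1.2]{so16} identifies $H^0(X,\Omega_X^{\rm ch})$ with the $N=4$ vacuum module, a duality argument gives $H^2(X,\Omega_X^{\rm ch})\cong H^0(X,\Omega_X^{\rm ch})$, the $H^1$ contribution is fixed by comparison with the elliptic genus, and the second statement follows from Prop.~\ref{HEGforK3} via (\ref{expectedHEG}). Two remarks on the points where you diverge. First, the ``main obstacle'' you anticipate at $H^2$ is already settled in the literature: the paper invokes chiral Poincar\'e duality for the chiral de Rham complex, due to Malikov and Schechtman \cite{masc99}, rather than setting up a Serre duality compatible with the $N=4$ structure from scratch; combined with \cite[Thm.~5.1.7]{gr16} for the $\qu J_0$-grading this yields $H^0\oplus H^2\cong\HHH_0\otimes\widetilde\HHH_0$ exactly as in the paper's proof. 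Second, you overshoot slightly at $H^1$: the Euler-characteristic constraint (\ref{elliptichiderham}) together with (\ref{definitionofe}) pins down only the \emph{character} of $H^1(X,\Omega_X^{\rm ch})$, and your inference ``i.e.\ $H^1\cong 20\,\mathcal H_{\rm mm}\oplus\bigoplus_n a_n\mathcal H_n$'' -- and with it your claimed geometric re-derivation of $a_n\geq0$ -- presupposes that $H^1$ is a completely reducible unitary $N=4$ module, which is not established by your argument; \cite[Thm.~3.2]{so17} gives unitarity only for the associated graded object of a filtration, and the paper accordingly cites $a_n\geq0$ from \cite{eghi09,eot10} as an input rather than obtaining it as an output. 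The paper sidesteps this entirely by staying at the character level: since the $j=1$ term enters $\EEE^{\rm ch}_{\rm Hodge}$ with no $u$-dependence, one sets $\nu=0$, uses (\ref{chHEG0EG}), and solves for the unknown trace over $H^1$ -- which is in fact all that your substitution step actually uses. With that caveat, your argument matches the paper's, and your sign bookkeeping (the vacuum contributing $-(u+u^{-1})\chi_0(\tau,z)$ from degrees $0$ and $2$) is consistent with the proof in the text.
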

\begin{proof}
The second statement follows from the first one by Prop.~\ref{HEGforK3},
while the first statement is almost immediate from Bailin Song's result
\cite[Thm.~1.2]{so16}
that the  chiral de Rham complex
for a K3 surface does not have global holomorphic sections 
other than those furnishing the $N=4$ superconformal 
vertex operator algebra
at central charge $c=6$, which were established in \cite{bhs08,he09}.
See also \cite[\S4]{so17}, where 
all relevant steps of the following proof can be found, as well. 

Indeed, from \cite[Thm.~1.2]{so16} and using the notations introduced in
Sect.~\ref{genericK3CFT}, above, 
it follows that as a representation of the $N=4$ superconformal algebra,
$H^0(X,\Omega^{ch}_X)\cong\HHH_0$. 
Moreover, Poincar\'e duality holds for the chiral de Rham complex
\cite{masc99}, hence $H^2(X,\Omega^{ch}_X)\cong\HHH_0$, as well.
Extending the $N=4$ superconformal algebra by $\qu J_0$, more
precisely we have 
$H^0(X,\Omega^{ch}_X)\oplus H^2(X,\Omega^{ch}_X)\cong\HHH_0\otimes\widetilde\HHH_0$.
Thus by Def.~\ref{chiralHEG}, 
$$
\EEE^{\rm ch}_{\rm Hodge}(X;\tau,z,\nu)
= -(u+u^{-1}) \chi_0(\tau,z)
- y^{-{1}} \tr\nolimits_{H^1(X,\Omega^{\rm ch}_X)} 
\left( (-1)^{J_0} y^{J_0}  q^{L_0^{\rm top}} \right).
$$
Insertion of $\nu=0$ according to (\ref{chHEG0EG}) yields 
the complex elliptic genus $\EEE(X;\tau,z)$, and since the 
only unknown contributions to the right hand side of the
above formula are independent of $u$, one can solve
for those unknown contributions at $\nu=0$, confirming the claim.
\end{proof}
Reversing the above arguments, one may view the string theory
derivation of Prop.~\ref{proposedHEG} given previously as an alternative, 
not entirely mathematical derivation of the beautiful result 
\cite[Thm.~1.2]{so16} that the  global holomorphic sections of the chiral de Rham complex
for K3 surfaces yield precisely the $N=4$ superconformal 
vertex operator algebra
at central charge $c=6$. 
Moreover, we have
\begin{proposition}\label{assumptionproved}
If the infinite volume limit of a topologically half-twisted sigma model
on a K3 surface $X$ yields the sheaf cohomology of the chiral de Rham complex 
of $X$, as is argued in \mb{\cite{ka05}}, then the generic
chiral algebra of K3 theories is the $N=4$ 
superconformal vertex operator algebra at central charge
$c=6$.
\end{proposition}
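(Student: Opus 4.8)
The plan is to combine the hypothesis (Kapustin's identification \cite{ka05}) with Bailin Song's theorem \cite[Thm.~1.2]{so16} in order to pin down the generic holomorphic content of K3 theories, and thereby to verify Assumption \ref{K3chiralalgebra} directly. The conceptual backbone is the upper semicontinuity discussed in Sect.~\ref{genericity}: the dimensions of the common eigenspaces of $L_0,\,J_0,\,\qu J_0$ can only jump upward at non-generic points, so the generic (minimal) holomorphic content coincides with the content common to all K3 theories, which is exactly what an infinite volume limit extracts.

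First I would isolate, in the decomposition (\ref{refinedansatz}) of $\H^R$, the purely holomorphic contributions, namely the summands with right-moving factor the vacuum representation $\qu{\HHH_0}$. The left-moving vacuum $\HHH_0$ furnishes the $N=4$ algebra itself, whereas each summand $f_n\HHH_n\otimes\qu{\HHH_0}$ with $f_n>0$ contributes holomorphic fields beyond $N=4$. Thus the generic chiral algebra equals the $N=4$ algebra at $c=6$ precisely when the generic values of all $f_n$ vanish, which is another formulation of Assumption \ref{K3chiralalgebra}.

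Next I would feed in the geometry. By hypothesis, the infinite volume limit of the Neveu-Schwarz sector of the topologically half-twisted sigma model on $X$ is the sheaf cohomology $H^\ast(X,\Omega_X^{\rm ch})$, a subspace common to all K3 theories and, by the semicontinuity above, realizing the generic content. Under Grimm's grading of the chiral de Rham complex \cite[Thm.~5.1.7]{gr16}, the degree $H^j(X,\Omega_X^{\rm ch})$ is the eigenspace of $\qu J_0$ with eigenvalue $j$, so the holomorphic fields, which carry trivial right-moving charge $\qu J_0=0$, are exactly those in $H^0(X,\Omega_X^{\rm ch})$. Song's theorem \cite[Thm.~1.2]{so16}, already used in the proof of Prop.~\ref{proposedHEG}, then gives $H^0(X,\Omega_X^{\rm ch})\cong\HHH_0$ as a representation of the $N=4$ superconformal algebra, with no further global holomorphic sections. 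Hence the generic holomorphic content is precisely $\HHH_0$, forcing the generic $f_n$ to vanish and establishing Assumption \ref{K3chiralalgebra}.

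The hard part will be the matching of structures in the middle step: one must argue that the vertex operator algebra structure on $H^0(X,\Omega_X^{\rm ch})$ is transported, through Kapustin's physics-level identification and the intervening topological half-twist and spectral flow, to the genuine chiral algebra of the generic K3 theory, and not merely that graded dimensions agree. Equally delicate is the claim that the infinite volume limit truly realizes the generic point, so that $H^0(X,\Omega_X^{\rm ch})$ captures the minimal rather than merely some common holomorphic content; here the assumption that the spectral data vary at least continuously across the moduli space, together with the upper semicontinuity of eigenspace dimensions, does the work. Once Kapustin's identification is granted as in the hypothesis, these two points reduce the assertion to Song's theorem, which supplies the decisive input that no holomorphic fields survive beyond the $N=4$ algebra at $c=6$.
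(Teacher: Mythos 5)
Your proof is correct, but it takes a recognizably different route from the paper's. The paper argues by a sandwich of \emph{full} spaces: it establishes the chain $\widehat\H^R_{\rm min}\subset\widehat\H^R\subset H^\ast(X,\Omega_X^{\rm ch})$ --- the first inclusion being \eqref{crucialcontained}, which rests on the positivity $a_n\geq0$ of \cite{eghi09,eot10}, the second coming from Kapustin's identification together with the fact that the infimum defining $N_{h,Q,\qu Q}$ in Def.~\ref{genericCFTHEG} runs over the entire moduli space while $H^\ast(X,\Omega_X^{\rm ch})$ is generic only in the volume direction --- and then invokes the character identity of Prop.~\ref{proposedHEG} together with \eqref{minimalgenera} to force all three spaces to coincide, concluding via \eqref{expectedHEG} and Prop.~\ref{HEGforK3}. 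You instead work only in the holomorphic ($\qu J_0=0$) sector: by Grimm's grading \cite[Thm.~5.1.7]{gr16} the holomorphic fields sit in $H^0(X,\Omega_X^{\rm ch})$, Song's theorem \cite[Thm.~1.2]{so16} gives $H^0(X,\Omega_X^{\rm ch})\cong\HHH_0$, and the generic holomorphic content is squeezed between $\HHH_0$ (present in every K3 theory by $N=(4,4)$ supersymmetry) and this, so the generic $f_n$ in \eqref{general} vanish, which is exactly Assumption \ref{K3chiralalgebra}. Your argument is more economical: it bypasses Prop.~\ref{proposedHEG}, the chiral Hodge-elliptic genus formula, and even the input $a_n\geq0$ needed to define $\widehat\H^R_{\rm min}$. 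What the paper's longer route buys is the stronger conclusion $\widehat\H^R=\widehat\H^R_{\rm min}=H^\ast(X,\Omega_X^{\rm ch})$, i.e.\ that the whole generic space of states --- not merely the chiral algebra --- is modelled by chiral de Rham cohomology, which is precisely what feeds the Mathieu Moonshine discussion in Sect.~\ref{Mathieu}. Two remarks on your caveats: the step you flag as delicate, namely that the infinite volume limit bounds the generic content from above, is actually the easy direction, immediate from the infimum in Def.~\ref{genericCFTHEG} being taken over the entire moduli space, of which the generic-volume subfamily at fixed hyperk\"ahler structure and B-field is a part (this is also where the paper disposes of the possibility of several components of the moduli space); no continuity or semicontinuity is needed for that inequality, only for interpreting the infimum as a generically attained value. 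And your second caveat --- that Kapustin's identification is used at the level of representations of the $N=4$ algebra extended by $\qu J_0$ rather than of vertex algebra structure --- applies equally to the paper's own proof and is the honest measure of the physics input in both arguments.
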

\begin{proof}
Our assumption on the infinite volume limit of a topologically half-twisted
sigma model on a K3 surface $X$ implies that $H^\ast(X,\Omega_X^{ch})$
is the generic space of states of K3 theories at generic volume and some
fixed choice of a hyperk\"ahler structure and B-field on $X$. On the other
hand, $\widehat\H^R$ is the generic space of states at generic values of
all moduli of K3 theories. Hence we have
$$
\widehat\H^R_{\rm min}
\quad\stackrel{\mbox{\scriptsize\rm\eqref{crucialcontained}}}{\subset}
\quad\widehat\H^R \quad
\subset \quad H^\ast(X,\Omega_X^{ch}).
$$
Note that this even holds if the moduli space of K3 theories 
possesses more than one component, since by construction,
the number $N_{h,Q,\qu Q}$ in Def.~\ref{genericCFTHEG} is obtained
as infimum over the entire moduli space, and $\widehat\H^R_{\rm min}$ is constructed
as to obey \eqref{crucialcontained}. 
By \eqref{minimalgenera} and Prop.~\ref{proposedHEG}, 
the full characters of 
$\widehat\H^R_{\rm min}$ and $H^\ast(X,\Omega_X^{ch})$ agree, hence
$\widehat\H^R_{\rm min}=H^\ast(X,\Omega_X^{ch})$ and thus 
$\widehat\H^R_{\rm min}=\widehat\H^R$. We conclude that 
$\EEE^0_{\rm Hodge}(X;\tau,z,\nu) = \EEE^{00}_{\rm Hodge}(X;\tau,z,\nu)$
in \eqref{minimalgenera} and thus, by Prop.~\ref{HEGforK3},
that Assumption \ref{K3chiralalgebra} holds, as claimed.
\end{proof}
The above results explain why, in contrast to the chiral de Rham complex, 
the virtual bundle 
$\E_{q,-y}$ can only be of limited use for the investigation of the fine
structure of K3 theories. Indeed, by what was said in Sect.~\ref{HodgeK3},
$\E_{q,-y}$ has global holomorphic sections that do not belong to the $N=4$
superconformal vertex operator algebra at central charge $c=6$.
This is a profound difference to the sheaf cohomology of the
chiral de Rham complex,  which in turn seems to beautifully model
a generic field content of all K3 theories.

Note furthermore that by Prop.~\ref{proposedHEG},
the chiral Hodge-elliptic genus of K3 surfaces is independent of the complex
structure, i.e.\ it yields a new topological invariant in this case.
Kapustin's work only assumes the transition to an infinite volume limit. We
have thus shown the suprising fact that in the case of K3 theories, such an
infinite volume limit, viewed solely as a representation of the $N=4$ superconformal
algebra extended by $\qu J_0$, yields a generic space of states for all
K3 theories. Moreover,
the result of Prop.~\ref{proposedHEG} is compatible
with Conjecture \ref{mirror}, as one immediately checks, since the mirror of 
a K3 surface is a K3 surface.
\subsection{A geometric  Mathieu Moonshine Module}\label{Mathieu}
We close this note by commenting on the consequences 
of our findings for Mathieu Moonshine\footnote{Some aspects 
of this discussion can be found 
analogously in the preprint \cite{so17} by Bailin Song,
which reached me during the final stages of writing this note,
as mentioned already in the Introduction.}.
To this end, consider a K3 surface $X$ with 
fixed  complex structure.
Then any finite symplectic automorphism group $G$ of 
$X$ has a natural induced action on the cohomology
$H^\ast( X, \E_{q,-y})$
of the virtual bundle $\E_{q,-y}$, but also on the
cohomology of the chiral de Rham complex, according to
\cite[(2.1.3)]{goma03}. By \cite[Thm.~0.3]{mu88}, 
$G$ is a subgroup of the one-point stabilizer
$M_{23}$ of the Mathieu group $M_{24}$.
{\em Mathieu Moonshine}, on the other hand, 
which was discovered by Eguchi, Ooguri and Ta\-chi\-ka\-wa in
\cite{eot10} and proved by Gannon in \cite{ga12}, 
predicts that the decomposition (\ref{definitionofe})
into characters of irreducible representations of the
$N=4$ superconformal algebra is governed by an
action of $M_{24}$. Yet what $M_{24}$ should act on, has
been a mystery, so far.

Up to now, $H^\ast( X, \E_{q,-y})$ and 
$H^\ast(X,\Omega_X^{\rm ch})$ seemed equally
auspicious  to this effect, since both carry natural
actions of an $N=4$ superconformal algebra
at central charge $c=6$, and of the finite symplectic
automorphism group $G$. 
By \cite[Thm.~4.3]{crho13} and 
\cite[Thm.~3.3]{so17}, the resulting twisted elliptic genera
agree with those of Mathieu Moonshine. 
However,
Prop.~\ref{proposedHEG} now shows that 
$H^\ast(X,\Omega_X^{\rm ch})$ is by far
more promising. 
Indeed, it implies that as a module of
the $N=4$ superconformal algebra, 
none of the contributions coming from
massive representations contains a {\em virtual}  representation,
in agreement with the results of \cite{ga12}. As was explained
in Sect.~\ref{HodgeK3}, this is different for 
$H^\ast( X, \E_{q,-y})$. Since the focus
of \cite{crho13} is entirely on the fine structure
of $\E_{q,-y}$, this may clarify why their attempts
of explaining Mathieu Moonshine failed.
We also conclude that our 
conjecture \cite[Conjecture 1]{we14} is 
a red herring when it comes to  a geometric realization
of the representation of $M_{24}$ that is relevant for 
Mathieu Moonshine. However, since this conjecture is correct
\cite{crwe15}, we hope that it may prove useful in the
study of the {\em complex Hodge-elliptic genus} of \cite{katr16}, instead. 
On the other hand, the above findings support the expectations that
we stated in \cite[\S4.2]{we14}, namely that the (holomorphic) chiral
de Rham complex might bear the key to understanding
Mathieu Moonshine. 
\smallskip

To arrive at a satisfactory explanation for Mathieu Moonshine,
one must now find a natural way to equip 
$H^\ast( X, \Omega_X^{\rm ch})$ with all the 
structures predicted by Mathieu Moonshine. 

As was pointed out above, 
for any choice of complex structure
on our K3 surface,  the corresponding finite symplectic 
automorphism groups act naturally on 
$H^\ast( X, \Omega_X^{\rm ch})$ in a fashion that is
compatible with Mathieu Moonshine.
One must now find a way to combine the actions of 
all such finite symplectic
automorphism groups to the action of $M_{24}$. 
To do so,  in \cite{tawe11,tawe12,tawe13}
Taormina and the author
have proposed a technique
called \textsc{symmetry surfing}.
Focusing on standard $\Z_2$-orbifold conformal field theories obtained
from non-linear sigma models with target a complex two-torus, 
in these works we have been able to show that symmetry
surfing allows to combine all finite symplectic automorphism groups
of Kummer surfaces to the maximal subgroup
$\mathbb Z_2^4\rtimes A_8$ of $M_{24}$. 

Furthermore, in \cite{tawe13} Taormina and the author show that
the action of this group on the leading order massive representation may
be realized on a subspace of the space of states that is common to all 
standard $\Z_2$-orbifold conformal field theories obtained
from non-linear sigma models with target a complex two-torus.
The resulting representation
is equivalent to the restriction of the corresponding Mathieu 
Moonshine action of $M_{24}$ to this subgroup. 
The symmetry groups from distinct points in
moduli space must be combined with a {\em twist}. 
Further evidence in favour of symmetry surfing, including the twist,
is provided in \cite{gakepa16}. 
The focus on a subspace of the space of states which is
common to all K3 theories that have been accessible to 
these methods, so far, is in full accord with the 
expectation that the 
cohomology of the chiral de Rham complex might play
a key role in the explanation of Mathieu Moonshine. 
The behaviour of the chiral Hodge-elliptic genus
found in this note further  supports this idea.  
Indeed, we view the chiral Hodge-elliptic genus as a refinement of
the traditional complex elliptic genus. Its
agreement with the generic conformal field theoretic
Hodge-elliptic genus, addressed in
Props.~\ref{proposedHEG}
and \ref{assumptionproved}, supports the idea that the sheaf cohomology
$H^\ast(X,\Omega_X^{\rm ch})$ of the chiral de Rham complex should be viewed as 
a model for a subspace of the space of states that
is {\em generically} present in K3 theories, along the lines
presented in Sect.~\ref{genericK3CFT}.
Whether or not this subspace varies smoothly with respect
to the moduli remains unanswered, for the time being.
As is emphasized in \cite{tawe11,tawe12,tawe13}, 
symmetry surfing crucially requires to restrict
attention to the {\em geometric} symmetry groups, i.e.\
to groups $G$ that arise as finite symplectic automorphism
groups of K3 surfaces\footnote{To lift the action of $G$ to a K3 theory
with geometric interpretation on the respective K3 surface, one may
allow a non-trivial B-field iff the latter can be represented by some 
$G$-invariant $B\in H^2(K3,\R)$.}. We have already pointed
out in \cite[\S4.2]{we14} that this might have its explanation
in a required compatibility with an infinite
 volume limit of topologically half-twisted K3 theories, where we expect
to find the sheaf cohomology of the chiral de Rham complex
by Kapustin's claims \cite{ka05}.

It remains an open problem,
however, to extend the action of the maximal subgroup 
$\mathbb Z_2^4\rtimes A_8$ to an 
action of the entire group $M_{24}$. Then, an interpretation 
must be found for 
the action of those elements of $M_{24}$ which cannot act
as finite symplectic automorphisms on any K3 surface.
Furthermore, the behaviour
of the multiplicity spaces of the {\em massless} representations of 
the $N=4$ superconformal algebra remains obscure. 
Not least, why of all groups the Mathieu group $M_{24}$ plays such
a prominent role for K3, remains unknown.
\smallskip

Finally, Mathieu Moonshine predicts the structure of a vertex operator
algebra on the representation space that underlies 
$H^\ast( X, \Omega_X^{\rm ch})$. Indeed, according to
\cite[Prop.~3.7 and Def.~4.1]{bo01}, the cohomology of the chiral
de Rham complex bears the structure of a super vertex operator
algebra. For the $\Z_2$-orbifold conformal field theories 
obtained from non-linear sigma models with target a complex two-torus, 
the results of \cite{boli00a,frsz07,tawe12,gakepa16}  give strong
evidence in favour of compatibility with
the combined symmetry group $\mathbb Z_2^4\rtimes A_8$ of $M_{24}$,
if one respects the twist. For any K3 surface $X$, 
by introducing a novel filtration on $H^\ast( X, \Omega_X^{\rm ch})$, Song proves 
in \cite[Thm.~3.2]{so17} that the associated graded object
 is a unitary representation of
the $N=4$ superconformal vertex operator algebra at central
charge $c=6$.
\smallskip

Altogether,    Mathieu Moonshine seems to gradually
unveil its mysteries.
\bigskip

{\textbf{Acknowledgements}}

It is my great pleasure to thank Thomas Creutzig,
Shamit Kachru, Stefan Kebekus,
Anatoly Libgober, Emanuel Scheidegger and Anne Taormina  
for very helpful communications and discussions. 
I particularly thank Bailin Song for his comments on an
earlier version of this note, which
lead to the formulation of Prop.~\ref{chiralisnotHodge}
and to a considerable extension of the interpretation of
my calculations. I am grateful to an anonymous referee
for their diligent reading of the manuscript and constructive
criticism.

I am also grateful to the organisers of the program on {\em Automorphic forms, mock modular forms 
and string theory} in September 2016 and to the Simons Center for Geometry and Physics at Stony Brook
for the support and hospitality, and for the inspiring environment during 
the initial steps of this work.
\appendix
\section{Some $N=4$ characters}\label{N4characters}
The characters of the irreducible representations of the $N=4$ superconformal
algebra have been determined explicitly in \cite{egta88a}. Here, we restrict
ourselves to stating the characters in the twisted Ramond sector $\widetilde{\rm R}$
at central charge $c=6$. 

For our purposes, the most convenient formulas 
use the standard Jacobi theta functions, where we take the following 
normalizations:
\begin{eqnarray*}
\vartheta_1(\tau,z) & := &
i\sum_{n=-\infty}^\infty (-1)^n q^{{1\over2}(n-{1\over2})^2} y^{n-{1\over2}}\\
&=& iq^{{1\over8}} y^{-{1\over2}} \prod_{n=1}^\infty (1-q^n)(1-q^{n-1}y)(1-q^{n}y^{-1}),\nonumber\\
\vartheta_2(\tau,z) & := &
\sum_{n=-\infty}^\infty q^{{1\over2}(n-{1\over2})^2} y^{n-{1\over2}}\\
&=&q^{{1\over8}} y^{-{1\over2}} \prod_{n=1}^\infty (1-q^n)(1+q^{n-1}y)(1+q^{n}y^{-1}),\nonumber\\
\vartheta_3(\tau,z) & := &
\sum_{n=-\infty}^\infty q^{{n^2\over2}} y^{n}\\
&=&\prod_{n=1}^\infty (1-q^n)(1+q^{n-{1\over2}}y)(1+q^{n-{1\over2}}y^{-1}),
\\
\vartheta_4(\tau,z) & := &
\sum_{n=-\infty}^\infty (-1)^n q^{{n^2\over2}} y^{n}\\
&=&\prod_{n=1}^\infty (1-q^n)(1-q^{n-{1\over2}}y)(1-q^{n-{1\over2}}y^{-1}).
\end{eqnarray*}
Moreover, we need  the Mordell function 
$h_3(\tau)$ \cite[p.~347]{mo33},
$$
h_3(\tau) :=
{1\over\eta(\tau)\vartheta_3(\tau,0)} \sum_{m\in\mathbb{Z}} {q^{{m^2\over2}-{1\over8}}\over 1+q^{m-{1\over2}}}
=
{2\over\eta(\tau)\vartheta_3(\tau,0)} \sum_{m\in\mathbb{N}\setminus\{0\}} {q^{{m^2\over2}-{1\over8}}\over 1+q^{m-{1\over2}}}.
$$
Then,  including the more commonly used notations for 
those characters,
\begin{eqnarray*}
\chi_0(\tau,z) &=& \mbox{ch}^{\widetilde R}_0(\ell=0;\tau,z)\\
&=& -2\left( {\vartheta_3(\tau,z)\over\vartheta_3(\tau,0)} \right)^2
+ \left( {q^{-{1\over8}}\over\eta(\tau)} - 2h_3(\tau)\right) \cdot
\left( {\vartheta_1(\tau,z)\over\eta(\tau)} \right)^2\\
\chi_{\rm mm}(\tau,z) &=& \mbox{ch}^{\widetilde R}_0(\ell={\textstyle{1\over2}};\tau,z)\\
&=& \left( {\vartheta_3(\tau,z)\over\vartheta_3(\tau,0)} \right)^2
+ h_3(\tau) 
\left( {\vartheta_1(\tau,z)\over\eta(\tau)} \right)^2,\\
\widehat\chi(\tau,z) &=& \mbox{ch}^{\widetilde R}(\ell={\textstyle{1\over2}};\tau,z)\\
&=& {q^{-{1\over8}}\over\eta(\tau)} \left( {\vartheta_1(\tau,z)\over\eta(\tau)} \right)^2.
\end{eqnarray*}
From these formulas, one reads the leading order contributions 
in $q$ of each character:
\begin{equation}\label{leadingorder}
\begin{array}{rcl}
\chi_0(\tau,z) = -y-y^{-1} + {\mathcal O}(q), \quad
\chi_{\rm mm}(\tau,z) &=& 1 + {\mathcal O}(q), \;\\[5pt]
\widehat{\chi}(\tau,z) &=& 2-y-y^{-1} + {\mathcal O}(q).
\end{array}
\end{equation}
%
%\bibliographystyle{kw}
%\bibliography{kw} 
%
\newcommand{\etalchar}[1]{$^{#1}$}
\def\polhk#1{\setbox0=\hbox{#1}{\ooalign{\hidewidth
  \lower1.5ex\hbox{`}\hidewidth\crcr\unhbox0}}} \def\cprime{$^\prime$}
  \newcommand{\noopsort}[1]{}
\providecommand{\bysame}{\leavevmode\hbox to3em{\hrulefill}\thinspace}

\end{document}